\newtheorem{theorem}{Theorem}
\newtheorem{definition}[theorem]{Definition}
\newtheorem{lemma}[theorem]{Lemma}
\newtheorem{corollary}[theorem]{Corollary}
\newtheorem{proposition}[theorem]{Proposition}
\newcommand{\lz}{\mathsf{LZ}_\mathsf{77}}
\newcommand{\lze}{\mathsf{LZ}_\mathsf{end}}
\newcommand{\ze}{\mathsf{Z'}}
\newcommand{\lzph}{\mathsf{z}_\mathsf{77}}
\newcommand{\lzeph}{\mathsf{z}_\mathsf{end}}
\newcommand{\zeph}{\mathsf{z'}}
\newcommand{\beg}{\mathsf{b}}
\newcommand{\en}{\mathsf{e}}
\newcommand{\lzelast}[1]{\mathsf{LZ}_\mathsf{end}(#1)\mathsf{.last}}
\newcommand{\zelast}[1]{\mathsf{Z'}_{#1}\mathsf{.last}}
\begin{document}
\title{On the approximation ratio of LZ-End to LZ77}
\author{Takumi Ideue$^{1}$ \quad 
Takuya Mieno$^{2,3}$ \quad 
Mitsuru Funakoshi$^{2,3}$ \quad \\ 
Yuto Nakashima$^{2}$ \quad 
Shunsuke Inenaga$^{2,4}$ \quad 
Masayuki Takeda$^{2}$ \\ 
{$^1$ Department of Information Science and Technology,} \\
{Kyushu University, Fukuoka, Japan}\\
  {\texttt{ideue.takumi.274@s.kyushu-u.ac.jp}}\\
{$^2$ Department of Informatics, Kyushu University, Fukuoka, Japan}\\
  {\texttt{\{takuya.mieno,mitsuru.funakoshi,yuto.nakashima,}}\\
  {\texttt{inenaga,takeda\}@inf.kyushu-u.ac.jp}}\\
{$^3$ Japan Society for the Promotion of Science, Tokyo, Japan}\\
{$^4$ PRESTO, Japan Science and Technology Agency, Kawaguchi, Japan}\\
}
\maketitle
\begin{abstract}
  A family of Lempel-Ziv factorizations is a well-studied string structure.
  The LZ-End factorization is a member of the family 
  that achieved faster extraction of any substrings (Kreft \& Navarro, TCS 2013).
  One of the interests for LZ-End factorizations is 
  the possible difference between the size of LZ-End and LZ77 factorizations.
  They also showed families of strings 
  where the approximation ratio of the number of LZ-End phrases 
  to the number of LZ77 phrases asymptotically approaches 2.
  However, the alphabet size of these strings is unbounded.
  In this paper, we analyze the LZ-End factorization of the period-doubling sequence.
  We also show that the approximation ratio for the period-doubling sequence 
  asymptotically approaches 2 for the binary alphabet.
\end{abstract}

\section{Introduction}
\setcounter{footnote}{0} 

The \emph{Lempel-Ziv 77 compression} (\emph{LZ77})~\cite{LZ77} is
one of the most successful lossless compression algorithms to date.
On the practical side, LZ77 and its variants have been used as a core
of compression software such as zip, gzip, rar,
and compressed formats such as PNG, JPEG, PDF.
In addition to these real world applications,
compressed self-indexing structures based on LZ77 have been proposed~\cite{DBLP:journals/tcs/DoJSS14,DBLP:conf/lata/GagieGKNP12,DBLP:conf/latin/GagieGKNP14,Karkkainen96lempel-zivparsing}.
An LZ77-based compressed representation of a string
allowing for fast access, rank, and select queries also exists~\cite{DBLP:conf/dcc/BelazzouguiGGKO15}.

On the (more) theoretical side, the left-to-right greedy factorization
in LZ77, a.k.a. the \emph{LZ77-factorization}, has widely been considered for decades.
It parses a given input string $w$ into
a sequence $p_1, \ldots, p_z$ of non-empty substrings
such that $p_1 = w[1]$ and $p_i$ for $i \geq 2$ is the shortest prefix of $p_i \cdots p_z$
that does not occur in $p_1 \cdots p_{i-1}$.
This implies that the prefix $p_{i}[1..|p_{i}|-1]$ occurs in $p_1 \cdots p_{i-1}$,
and such an occurrence is called a \emph{source} of $p_i$\footnote{This version of LZ77 is often called \emph{non-overlapping LZ77} or \emph{LZ77 without self-references}, since each phrase $p_i$ never overlaps with any of its sources.}.

Among many versions of LZ77 (c.f.~\cite{crochemore1981optimal,LZD,DBLP:journals/algorithmica/KosolobovVNP20,DBLP:journals/tcs/KreftN13,DBLP:conf/spire/KuruppuPZ10,LZRR,LZ78}),
this paper focuses on the \emph{LZ-End} compressor
proposed by Kreft and Navarro~\cite{DBLP:journals/tcs/KreftN13}.
It is also based on a greedy parsing $q_1, \ldots, q_{z'}$ of an input string,
with a restriction that for each phrase $q_i$
there has to be a source which ends at the right-end of a phrase in
$q_1, \ldots, q_{i-1}$. 
This constraint permits fast substring extraction
without expanding the whole input string.
It is known that the LZ-End compression can be computed
in linear time in the input string length~\cite{DBLP:conf/esa/KempaK17},
or in compressed space with slight slow-down on compression time~\cite{DBLP:conf/dcc/KempaK17}.

One can regard LZ-End as a mix of LZ77 and LZ78~\cite{LZ78},
since in the LZ78 factorization the source of each phrase
has to begin and end at boundaries of previous phrases.
Since LZ78 belongs to the class of grammar compression~\cite{DBLP:journals/tit/CharikarLLPPSS05},
LZ-End can be seen as a new bridge between grammar compression and LZ77.

Now, a natural question arises.
How good is the compression performance of LZ-End?
Practical evaluation in the literature~\cite{DBLP:journals/tcs/KreftN13} has
revealed that the compression ratio of LZ-End
is quite close to that of LZ77 (at most 20\% worse),
but very little is understood in theory.
As in the literature,
we measure and compare the sizes of LZ-End and LZ77
by the numbers $z'$ and $z$ of their phrases in the factorizations, i.e., ``$z'$ versus $z$''.

Since LZ77 is an optimal greedy unidirectional parsing,
$z' \geq z$ always holds.
Thus we are concerned with \emph{the approximation ratio} of LZ-End to LZ77,
which is defined by $z'/z$.
Kreft and Navarro~\cite{DBLP:journals/tcs/KreftN13}
presented a simple family of strings
for which $z'/z$ is asymptotically $2$ over an alphabet of size
$n/3$, where $n$ is the length of the string.
Kreft and Navarro~\cite{DBLP:journals/tcs/KreftN13} conjectured that the upper bound for $z'/z$ is also $2$,
but to our knowledge no non-trivial upper bound is known.

In this paper, we show that the same lower bound for $z'/z$ can be obtained
on a binary alphabet,
thus significantly reducing the number of distinct characters used
in the analysis from $n/3$ to $2$.
In particular, we prove that $z'/z$ is asymptotically 2 
for the \emph{period-doubling sequences},
an interesting family of recursive strings.
While the LZ77-factorization of the period-doubling sequences
has an obvious structure (Proposition~\ref{prop:lz-ph}),
the LZ-End factorization of the period-doubling sequences
has a non-trivial structure and needs careful analysis
(see our extensive discussions in Section~\ref{sec:ratio} for detail).

Since the LZ77 factorization (without self-references) and
the LZ-End factorization for the unary string $a^n$ are the same,
our result uses a minimum possible number of distinct characters
to achieve such a lower bound for $z'/z$.

\paragraph*{\bf Related work.}
A famous variant of the LZ77 factorization,
which is called the \emph{C-factorization}~\cite{crochemore1981optimal}
and is denoted by $w = c_1 \cdots c_x$,
differs from the LZ77 in that each phrase $c_i$ is either a fresh character or 
the longest prefix of $c_i \cdots c_{x}$ that occurs in $c_1 \cdots c_{i-1}$.
The size $x$ of the C-factorization 
is known to be a lower bound
for the size of the smallest grammar which generates only the input string~\cite{DBLP:journals/tcs/Rytter03}.
A comparison of the LZ77 factorization and the C-factorization 
was also considered in the literature~\cite{DBLP:conf/mfcs/BerstelS06,DBLP:journals/algorithms/MitsuyaNIBT21}.
The structure of the C-factorization of the period-doubling sequences
was investigated in~\cite{DBLP:conf/mfcs/BerstelS06}.
We emphasize that our analysis of the LZ-End factorization
of the period-doubling sequences is independent
and is quite different from this existing work~\cite{DBLP:conf/mfcs/BerstelS06}.

\emph{Relative LZ} (\emph{RLZ}) is a practical modification
of LZ77 which efficiently compresses
a collection of highly repetitive sequences~\cite{DBLP:conf/spire/KuruppuPZ10}.
In~\cite{DBLP:journals/algorithmica/KosolobovVNP20}
an RLZ-based factorization of a string,
called the \emph{ReLZ-factorization}, was proposed.
The approximation ratio of ReLZ to LZ77 was shown to be $\Omega(\log n)$~\cite{DBLP:journals/algorithmica/KosolobovVNP20}, where $n$ denotes the length of the input string.
On the other hand,
in practice ReLZ was larger than LZ77 by at most a factor of two
in all the tested cases in~\cite{DBLP:journals/algorithmica/KosolobovVNP20}.
\section{Preliminaries} \label{sec:preliminaries}

\subsection{Strings}
Let $\Sigma$ be the binary {\em alphabet}.
An element of $\Sigma^*$ is called a {\em string}.
The length of a string $w$ is denoted by $|w|$.
The empty string $\varepsilon$ is the string of length 0.
Let $\Sigma^+$ be the set of non-empty strings,
i.e., $\Sigma^+ = \Sigma^* \setminus \{\varepsilon \}$.
For a string $w = xyz$, $x$, $y$ and $z$ are called
a \emph{prefix}, \emph{substring}, and \emph{suffix} of $w$, respectively.
They are called a \emph{proper prefix}, a \emph{proper substring}, and a \emph{proper suffix} of $w$
if $x \neq w$, $y \neq w$, and $z \neq w$, respectively.
Further, we say that $w$ has an \emph{internal occurrence} of $y$
if $y$ occurs in $w$ as a proper substring which is neither a prefix nor a suffix.
The $i$-th character of a string $w$ is denoted by $w[i]$, where $1 \leq i \leq |w|$.
For a string $w$ and two integers $1 \leq i \leq j \leq |w|$,
let $w[i..j]$ denote the substring of $w$ that begins at position $i$ and ends at
position $j$. For convenience, let $w[i..j] = \varepsilon$ when $i > j$.
For any $1 \le i \le |w|$, 
$w[i..|w|]\cdot w[1..i-1]$ is called a \emph{cyclic rotation} of $w$.
If a cyclic rotation of $w$ is not equal to $w$, the cyclic rotation is said to be proper. 
For any string $w$, let $w^1 = w$ and let $w^k = ww^{k-1}$ for any integer $k \ge 2$,
i.e., $w^k$ is the $k$-times repetition of $w$.
A string $w$ is said to be \emph{primitive} 
if $w$ cannot be written as $x^k$ for any $x \in \Sigma^*$ and $k \geq 2$.
Let $\overline{c}$ be the opposite character of $c$ in a binary alphabet 
(e.g., $\overline{a} = b, \overline{b} = a$ for alphabet $\{a, b\}$).
For any non-empty binary string $w$, 
$\widehat{w}$ denotes the string $w[1..|w|-1] \cdot \overline{w[|w|]}$.
We sometimes use $\beg(x)$ and $\en(x)$ 
as the beginning position and the ending position of a substring $x$ of a given string $w$,
if the occurrence of $x$ in $w$ is clear from a discussion.

\subsection{Lempel-Ziv factorizations}
We introduce the Lempel-Ziv 77 and LZ-End factorizations.
\begin{definition} [LZ77~\cite{LZ77}\footnote{This definition of LZ77 is different from the original one~\cite{LZ77} (see~\cite{DBLP:journals/tcs/KreftN13} for more information).}]
  The \emph{Lempel-Ziv 77 factorization} (LZ77 factorization for short) of a string $w$
  is the factorization $\lz(w) = p_{1}, \ldots, p_{z}$ of $w$
  such that $p_i[1..|p_i|-1]$ is the longest prefix of $p_i \cdots p_{z}$ which occurs in $p_1 \cdots p_{i-1}$.
  As an exception, the last phrase $p_z$ can be a suffix of $w$ which occurs in $p_1 \cdots p_{z-1}$.
\end{definition}
\begin{definition} [LZ-End~\cite{DBLP:journals/tcs/KreftN13}]
  The \emph{LZ-End factorization} of a string $w$
  is the factorization $\lze(w) = q_{1}, \ldots, q_{z'}$ of $w$
  such that $q_i[1..|q_i|-1]$ is the longest prefix of $q_i \cdots q_{z'}$ 
  which occurs as a suffix of $q_{1} \cdots q_{j}$ for some $j<i$.
  As an exception, the last phrase $q_{z'}$ can be a suffix of $w$ 
  which occurs as a suffix of $q_1 \cdots q_{j}$ for some $j<z'$.
\end{definition}
We refer to each $p_i$ and $q_i$ as an \emph{LZ phrase} and \emph{LZ-End phrase}, respectively.
For each phrase, associated longest substring is called a \emph{source} of the phrase.
$\lzph(w)$ and $\lzeph(w)$ denote the number of the LZ phrases and the LZ-End phrases of a string $w$, respectively.
For each $1 \le i \le \lzeph(w)$, $\lze(w)[i]$ denotes the $i$-th LZ-End phrase of $\lze(w)$.
Let $\lzelast{w}$ be the last LZ-End phrase of a string $w$, i.e., $\lzelast{w} = \lze(w)[\lzeph(w)]$.
Fig.~\ref{fig:example} shows examples of two factorizations.
\begin{figure}[ht]
  \includegraphics[width=\textwidth]{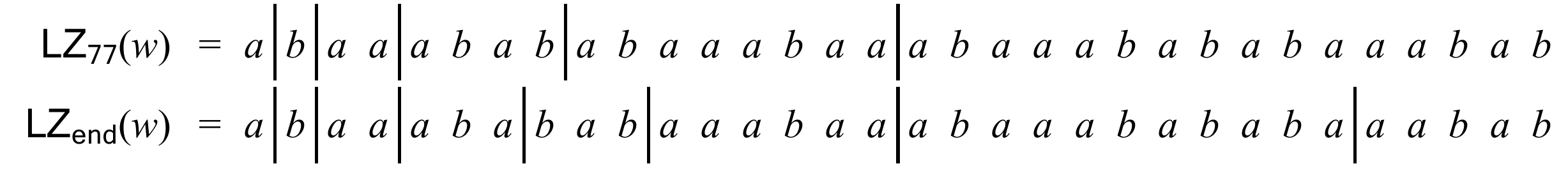}
  \caption{The upper one shows the LZ77 factorization of $w$ 
  and the lower one shows the LZ-End factorization of $w$, 
  where $w = abaaabababaaabaaabaaabababaaabab$.
  This $w$ is the fifth period-doubling sequence $S_5$ which will be defined later.}
  \label{fig:example}
\end{figure}

\subsection{Period-doubling sequence}
The \textit{period-doubling sequence} (cf.~\cite{allouche_shallit_2003}) is 
one of the prominent automatic sequences.
Let $S_k$ be the $k$-th period-doubling sequence for any $k \geq 0$.
The following two definitions are equivalent:
\begin{definition} \label{def:PD-morphism}
  $S_0 = a$ and $S_k = \phi(S_{k-1})$ for $k \ge 1$
  where $\phi$ is the morphism such that $\phi(a) = ab, \phi(b) = aa$.
\end{definition}
\begin{definition} \label{def:PD-doubling}
  $S_0 = a$ and $S_k = S_{k-1} \cdot \widehat{S_{k-1}}$ for $k \ge 1$.
\end{definition}
Let $n_k$ be the length of the $k$-th period-doubling sequence, i.e., $n_k = 2^k$.

\section{Properties on period-doubling sequence} \label{sec:properties}

The period-doubling sequences have many good combinatorial properties (see cf.~\cite{allouche_shallit_2003}).
In this section, we introduce helpful properties for our results on the period-doubling sequences.

\begin{lemma}\label{lem:primitive}
  For any $k \geq 0$, 
  $S_k$ is primitive.
\end{lemma}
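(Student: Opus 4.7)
The plan is to induct on $k$ using the doubling form of the definition, $S_k = S_{k-1} \cdot \widehat{S_{k-1}}$ (Definition~\ref{def:PD-doubling}), and to exploit the fact that $|S_k| = 2^k$ is a pure power of $2$, which severely restricts the admissible periods of any hypothetical proper power decomposition. The base case $k = 0$ is immediate: $S_0 = a$ has length $1$, hence cannot be written as $x^m$ with $m \geq 2$ and $x \in \Sigma^+$.

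For the inductive step, I assume $S_{k-1}$ is primitive and suppose for contradiction that $S_k = x^m$ for some $x \in \Sigma^+$ and $m \geq 2$. Since $m \cdot |x| = 2^k$ and $m \geq 2$, the length $|x|$ must be a power of $2$ with $|x| \leq 2^{k-1}$, which naturally splits into two cases.

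In the boundary case $|x| = 2^{k-1}$, we have $m = 2$, so $S_k = x \cdot x$. Comparing this factorization with $S_k = S_{k-1} \cdot \widehat{S_{k-1}}$ forces $x = S_{k-1}$ and simultaneously $x = \widehat{S_{k-1}}$, which is impossible since these two strings differ in their last character by the definition of $\widehat{\cdot}$. In the other case $|x| \leq 2^{k-2}$, the length $|x|$ divides $|S_{k-1}| = 2^{k-1}$, and the length-$2^{k-1}$ prefix of $S_k$ — which is precisely $S_{k-1}$ — is equal to $x^{2^{k-1}/|x|}$ with exponent $\geq 2$. This contradicts the inductive hypothesis that $S_{k-1}$ is primitive.

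There is essentially no hard step here; the only thing to be careful about is ensuring the two cases cover all possibilities for $|x|$ and that the boundary case is resolved by the observation $S_{k-1} \neq \widehat{S_{k-1}}$, which is immediate from the definition. The argument is otherwise just a length-counting check enabled by the very rigid power-of-two structure of $|S_k|$.
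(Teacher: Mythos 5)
Your proof is correct. It takes a mildly different route from the paper's: the paper argues directly, without induction, by noting that if $S_k = x^m$ with $m \geq 2$ then $|x|$ is a power of two dividing $2^{k-1}$, so the characters at positions $n_k/2$ and $n_k$ must coincide --- yet Definition~\ref{def:PD-doubling} forces $S_k[n_k] = \overline{S_{k-1}[n_{k-1}]} = \overline{S_k[n_k/2]}$, a contradiction in one step. Your induction splits into the boundary case $|x| = 2^{k-1}$ (resolved by the same observation, phrased as $S_{k-1} \neq \widehat{S_{k-1}}$) and the case $|x| \leq 2^{k-2}$, where the length-$2^{k-1}$ prefix $S_{k-1}$ inherits a proper power decomposition, contradicting the inductive hypothesis. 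Both arguments rest on the same two facts (the rigid power-of-two length and the flipped last character); the paper's version buys brevity by collapsing everything into a single character comparison, while yours is a touch longer but makes the case analysis on $|x|$ fully explicit.
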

\begin{proof}
  If $S_k$ is not primitive, $S_k$ has a period $2^i$ for some $i$.
  This implies that $S_k[n_{k}/2] = S_k[n_k]$, which contradicts Definition~\ref{def:PD-doubling}.
\end{proof}
\begin{lemma}[Proposition~{8.1.5} of~\cite{lothaire2005applied}] \label{lem:primitive-square}
  If a string $w$ is primitive, $ww$ has no internal occurrence of $w$.
\end{lemma}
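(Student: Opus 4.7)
The plan is to argue by contradiction. Suppose $ww$ has an internal occurrence of $w$, meaning $w$ occurs in $ww$ at some position $i$ with $1 < i \le |w|$; the occurrences at $i = 1$ and $i = |w|+1$ are the prefix and suffix respectively, and are excluded by the definition of \emph{internal occurrence}. Write $w = uv$ where $u = w[1..i-1]$ and $v = w[i..|w|]$. Because $1 < i \le |w|$, both $u$ and $v$ are non-empty.

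Next, I would unfold $ww = uvuv$ and read off the length-$|w|$ substring of $ww$ starting at position $i$: it is exactly $vu$. The assumption that $w$ occurs there therefore gives the commutation equation $uv = w = vu$. The second step is to invoke the classical Lyndon--Schützenberger theorem: two non-empty strings satisfy $uv = vu$ if and only if they are both powers of a common string, i.e., there exist a string $t$ and integers $a, b \ge 1$ with $u = t^a$ and $v = t^b$. Substituting back yields $w = uv = t^{a+b}$ with $a + b \ge 2$, directly contradicting the primitivity of $w$.

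The only genuine ingredient is the commutation lemma; once cited it reduces the argument to a single line. If I wanted to be self-contained rather than citing it, I would prove $uv = vu \Rightarrow u, v \in t^{*}$ by a short induction on $|u|+|v|$, splitting into the cases $|u|=|v|$ (where $u = v$ immediately) and $|u| \ne |v|$ (where peeling the shorter one off both sides of $uv = vu$ leaves a smaller commuting pair). The main ``obstacle,'' which is mild, is being careful with the boundary values of $i$ so that $u$ and $v$ are both non-empty, since that is exactly what forces $a+b \ge 2$ and thus the contradiction.
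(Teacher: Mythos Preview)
Your argument is correct and is the standard proof of this classical fact: an internal occurrence at position $i$ with $1<i\le|w|$ forces $w=uv=vu$ with $u,v\neq\varepsilon$, and the commutation lemma (Lyndon--Sch\"utzenberger) then yields $w=t^{a+b}$ with $a+b\ge 2$, contradicting primitivity. Note, however, that the paper does not give its own proof of this lemma at all; it simply quotes it as Proposition~8.1.5 of Lothaire's \emph{Applied Combinatorics on Words}, so there is no ``paper's proof'' to compare against---you have supplied exactly the standard argument that the cited reference contains.
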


\begin{lemma}\label{lem:ABAA}
  For any $k \geq 2$, 
  $S_k = A_k B_k A_k A_k$ where $A_k = S_{k-2}$ and $B_k = \widehat{A_k}$.
  Moreover, $A_k = A_{k-1} B_{k-1}$ and $B_k = A_{k-1} A_{k-1}$
  for any $k \geq 3$.
\end{lemma}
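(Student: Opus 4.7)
The plan is to prove both claims by direct manipulation using Definition~\ref{def:PD-doubling}, together with the easy identity that the $\widehat{\cdot}$ operator is an involution on non-empty strings, i.e., $\widehat{\widehat{x}} = x$, which follows immediately from the binary alphabet property $\overline{\overline{c}} = c$.

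First I would expand $S_k$ one step using Definition~\ref{def:PD-doubling} to get $S_k = S_{k-1}\widehat{S_{k-1}}$, and then expand $S_{k-1}$ once more to get $S_{k-1} = S_{k-2}\widehat{S_{k-2}}$. The main small computation is to simplify $\widehat{S_{k-1}} = \widehat{S_{k-2}\widehat{S_{k-2}}}$: since $\widehat{\cdot}$ only flips the final character of its argument, and since the last character of $\widehat{S_{k-2}}$ is the opposite of the last character of $S_{k-2}$, we have $\widehat{S_{k-1}} = S_{k-2}\cdot\widehat{\widehat{S_{k-2}}} = S_{k-2}\cdot S_{k-2}$. Substituting both expansions into $S_k = S_{k-1}\widehat{S_{k-1}}$ yields
\[
S_k \;=\; S_{k-2}\,\widehat{S_{k-2}}\,S_{k-2}\,S_{k-2} \;=\; A_k B_k A_k A_k,
\]
with $A_k = S_{k-2}$ and $B_k = \widehat{A_k}$, as required.

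For the moreover part (valid for $k \geq 3$, so that $k-2 \geq 1$ and the recursion of Definition~\ref{def:PD-doubling} applies to $S_{k-2}$), I would apply the exact same two identities one level deeper. Namely, $A_k = S_{k-2} = S_{k-3}\widehat{S_{k-3}} = A_{k-1}B_{k-1}$ by definition of $A_{k-1}, B_{k-1}$, and $B_k = \widehat{S_{k-2}} = S_{k-3}\cdot S_{k-3} = A_{k-1}A_{k-1}$ by the same simplification used above. No combinatorial argument on occurrences or primitivity is needed here; the proof is purely a telescoping calculation.

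I do not expect any real obstacle: the only subtlety is making sure the $\widehat{\cdot}$ operator is correctly propagated through concatenations, which is why it is worth recording the involution property $\widehat{\widehat{x}} = x$ as the first step rather than reproving it inline.
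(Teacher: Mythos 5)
Your proof is correct and matches the paper's approach in spirit: the paper simply declares the lemma ``straightforward from the definition,'' and your argument supplies exactly the routine unwinding one would expect (using the doubling recursion $S_k = S_{k-1}\widehat{S_{k-1}}$ of Definition~\ref{def:PD-doubling} rather than the morphism form of Definition~\ref{def:PD-morphism} that the paper cites, but the two are stated to be equivalent). The two identities you isolate --- $\widehat{xy} = x\widehat{y}$ for non-empty $y$ and the involution $\widehat{\widehat{x}} = x$ --- are all that is needed, and your range restrictions ($k \geq 2$ for the first claim, $k \geq 3$ for the moreover part) are exactly right.
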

\begin{proof}
  Straightforward from Definition~\ref{def:PD-morphism}.
\end{proof}
\begin{lemma} \label{lem:quart-occ-basis}
  For any $k \geq 2$, 
  $A_k A_k, A_k B_k$, and $B_k A_k$ have no internal occurrence of $A_k$.
  Hence the number of occurrences of $A_k$ in $S_k = A_k B_k A_k A_k$ is $3$.
\end{lemma}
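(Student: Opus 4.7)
The plan is to establish the three ``no internal occurrence'' claims for $A_k A_k$, $A_k B_k$, and $B_k A_k$ separately, and then read off the count of $3$ occurrences in $S_k = A_k B_k A_k A_k$ by a short windowing argument.

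For $A_k A_k$, I would simply combine Lemma~\ref{lem:primitive} (so $A_k = S_{k-2}$ is primitive) with Lemma~\ref{lem:primitive-square}, which gives the claim directly. For $A_k B_k$, the key observation is that $B_k = \widehat{A_k}$ differs from $A_k$ only in its last character, so $A_k B_k$ and $A_k A_k$ agree on their first $2|A_k| - 1$ positions. Any internal occurrence of $A_k$ in $A_k B_k$ starts at some $i$ with $2 \leq i \leq |A_k|$ and ends at $i + |A_k| - 1 \leq 2|A_k| - 1$, so it lies entirely within that shared prefix and pulls back to an internal occurrence in $A_k A_k$, contradicting the first case.

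The main obstacle is $B_k A_k$, because its unique differing position from $A_k A_k$, namely $|A_k|$, lies inside every internal-occurrence window $[i, i + |A_k| - 1]$ with $2 \leq i \leq |A_k|$. My plan here is: if $A_k$ occurred internally at such an $i$, then $A_k A_k[i \mathinner{.\,.} i + |A_k| - 1]$ -- a non-trivial cyclic rotation of $A_k$ -- would have to agree with $A_k$ except at the single position $|A_k| - i + 1$. Primitivity of $A_k$ ensures that this rotation is not equal to $A_k$ itself, and I would then invoke the elementary fact that any non-identical cyclic rotation of a string must disagree with it at at least two positions. This fact admits a short orbit-chasing proof in $\mathbb{Z}/|A_k|\mathbb{Z}$: if $p$ were the unique alleged disagreement, chaining the agreements around the orbit of $p$ under addition by $i - 1$ forces $w[p] = w[p + (i-1)]$, contradicting that $p$ is a disagreement.

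Finally, for the occurrence count, I would observe that any occurrence of $A_k$ in $S_k$ at a position other than $1$, $2|A_k|+1$, or $3|A_k|+1$ either starts a length-$|A_k|$ window entirely contained in one of the three length-$2|A_k|$ blocks $A_k B_k$ (positions $1$ through $2|A_k|$), $B_k A_k$ (positions $|A_k| + 1$ through $3|A_k|$), or $A_k A_k$ (positions $2|A_k|+1$ through $4|A_k|$) of $S_k$ -- in which case it constitutes an internal occurrence of $A_k$ in one of those blocks, ruled out by the three claims above -- or it starts at position $|A_k| + 1$, in which case the window is $B_k$, which differs from $A_k$. This leaves exactly the three listed positions.
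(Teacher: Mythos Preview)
Your proof is correct. For $A_k A_k$ and $A_k B_k$ you argue exactly as the paper does (primitivity plus Lemma~\ref{lem:primitive-square}, and the observation that $A_k B_k = \widehat{A_k A_k}$ shares its first $2|A_k|-1$ characters with $A_k A_k$). The treatment of $B_k A_k$, however, is genuinely different. The paper drops one level via Lemma~\ref{lem:ABAA}, writing $B_k A_k = A_{k-1}A_{k-1}A_{k-1}B_{k-1}$ and $A_k = A_{k-1}B_{k-1}$, and then locates all possible occurrences of $A_{k-1}$ (and hence of $A_k$) using the already-established cases at level $k-1$. You instead stay at level $k$ and use a Hamming-distance argument: an internal occurrence of $A_k$ in $B_k A_k$ would force $A_k$ to differ from a proper cyclic rotation of itself in exactly one position, which your orbit-chasing lemma in $\mathbb{Z}/|A_k|\mathbb{Z}$ rules out. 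Your route is more self-contained---it needs only primitivity of $A_k$ and a general combinatorics-of-words fact, not the recursive structure of the period-doubling sequence---whereas the paper's route reuses Lemma~\ref{lem:ABAA} and meshes with the inductive style used elsewhere in the paper. Your final windowing argument for the count of~$3$ is also fine and slightly more explicit than the paper's one-line ``Hence'' clause.
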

\begin{proof}
  If $k = 2$, the lemma clearly holds. We assume $k \ge 3$.
  Since $A_k = S_{k-2}$, $A_k$ is primitive.
  By Lemma~\ref{lem:primitive-square}, 
  $A_k A_k$ has no internal occurrence of $A_k$.
  Since $A_k B_k = \widehat{A_k A_k}$, $A_k B_k$ also has no internal occurrence of $A_k$.
  Similarly, $A_{k-1} A_{k-1}$ and $A_{k-1} B_{k-1}$ have no internal occurrence of $A_{k-1}$.
  Also, by Lemma~\ref{lem:ABAA}, $B_k A_k$ can be written as $A_{k-1} A_{k-1} A_{k-1} B_{k-1}$.
  These imply that $B_k A_k$ have no internal occurrence of $A_k = A_{k-1} B_{k-1}$.
\end{proof}
\begin{lemma} \label{lem:quart-cyclic-occ}
  For any $k \ge 3$ and any proper cyclic rotation $\alpha$ of $A_k$,
  the number of occurrences of $\alpha$ in $A_k A_k A_k$, $A_k B_k$, and $B_k A_k$ 
  are $2$, $1$, and $0$, respectively.
\end{lemma}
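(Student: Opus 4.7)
My plan is to combine two ingredients: $A_k$ is primitive (Lemma~\ref{lem:primitive}, since $A_k = S_{k-2}$) and $B_k$ differs from $A_k$ in exactly one position, namely the last one. From primitivity I get the standard fact that for a primitive $w$ of length $n$, the $n$ length-$n$ substrings of $ww$ at starting positions $1,\ldots,n$ are precisely the $n$ distinct cyclic rotations of $w$, each appearing exactly once (any two overlapping occurrences of a length-$n$ substring would force a period shorter than $n$ on that substring, contradicting primitivity of its rotations). In particular, any proper cyclic rotation $\alpha$ of $A_k$ occurs in $A_k A_k$ at exactly one position $j \in \{2,\ldots,|A_k|\}$.

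First I would handle $A_k A_k A_k$ by viewing it as two overlapping copies of $A_k A_k$, one starting at position $1$ and one starting at position $|A_k|+1$. Each copy contributes exactly one occurrence of $\alpha$ by the fact above, at positions $j$ and $j+|A_k|$ respectively. Since every length-$|A_k|$ window in $A_k A_k A_k$ is entirely contained in one of the two copies, these are all the occurrences, giving the desired count $2$.

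For $A_k B_k$, I would use that it is obtained from $A_k A_k$ by flipping only the character at position $2|A_k|$. The unique occurrence of $\alpha$ in $A_k A_k$ lies in positions $j,\ldots,j+|A_k|-1$ with $j+|A_k|-1 \le 2|A_k|-1$, hence it does not touch the flipped position and is preserved. The only candidate for a new occurrence is the window at position $|A_k|+1$, which now reads $B_k$; but any rotation of $A_k$ has the same character multiset as $A_k$, whereas $B_k$ does not, so $\alpha \ne B_k$. Hence exactly $1$ occurrence.

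The case $B_k A_k$ is the main obstacle, since the flipped character now sits at position $|A_k|$, which lies inside every candidate window starting at positions $1,\ldots,|A_k|$, so the previous transport argument breaks. Here I would instead invoke a character-count invariant: every length-$|A_k|$ substring of $A_k A_k$ is a cyclic rotation of $A_k$ and therefore shares its multiset of characters; toggling a single position strictly changes those counts, so no such window in $B_k A_k$ can equal $\alpha$. The only window untouched by the flip is at position $|A_k|+1$, where $B_k A_k$ reads $A_k$ itself, which cannot be a proper rotation. This yields the remaining count $0$.
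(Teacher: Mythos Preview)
Your argument is correct. For $A_k A_k A_k$ and $A_k B_k$ it is essentially the paper's argument (primitivity of $A_k$, plus the observation that $A_k B_k=\widehat{A_k A_k}$ only perturbs the very last position). For $B_k A_k$, however, you take a genuinely different route. The paper expands $B_k A_k = A_{k-1}A_{k-1}A_{k-1}B_{k-1}$ via Lemma~\ref{lem:ABAA}, pins down exactly where $A_{k-1}$ and $B_{k-1}$ can occur inside this string, and then case-splits on the shape of $\alpha$ (either $\alpha=B_{k-1}A_{k-1}$, or $\alpha=xB_{k-1}y$, or $\alpha=x'A_{k-1}y'$), ruling out each case by position constraints and a last-character mismatch. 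Your proof instead observes that flipping position $|A_k|$ turns every length-$|A_k|$ window starting in $\{1,\ldots,|A_k|\}$ into a string whose letter multiset differs from that of $A_k$, so none of these windows can equal a rotation $\alpha$; the sole unaffected window at position $|A_k|+1$ reads $A_k$ itself. This Parikh-vector argument is shorter, avoids the recursive decomposition entirely, and does not actually rely on $k\ge 3$ (it works as soon as $A_k$ is primitive). The paper's approach, on the other hand, makes the internal block structure explicit, which is closer in spirit to how the later proof of Lemma~\ref{lem:lzend-pd} reasons about occurrences.
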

\begin{proof}
  Since $A_k = S_{k-2}$ and Lemma~\ref{lem:primitive}, $A_k$ is primitive.
  This implies that $\alpha$ is also primitive.
  Thus, $A_k A_k$ has exactly one (internal) occurrences of $\alpha$.
  Namely, $\alpha$ occurs in $A_k A_k A_k$ exactly two times.
  Since $A_k B_k = \widehat{A_k A_k}$, $A_k B_k$ also has exactly one (internal) occurrence of $\alpha$.
  Finally, let us consider $B_k A_k = A_{k-1} A_{k-1} A_{k-1} B_{k-1}$.
  In a similar way of the proof of Lemma~\ref{lem:quart-occ-basis},
  we can show that both $A_{k-1} A_{k-1}$ and $A_{k-1} B_{k-1}$ have no internal occurrence of $B_{k-1}$. 
  From this facts and Lemma~\ref{lem:quart-occ-basis},
  $A_{k-1}$ occurs exactly three times and $B_{k-1}$ occurs exactly once in $B_k A_k$.
  If $\alpha = B_{k-1} A_{k-1}$, $\alpha$ cannot occur in $B_k A_k$.
  Otherwise, $\alpha$ can be written as either
  $x B_{k-1} y$ or $x' A_{k-1} y'$
  where $x$ (resp.~$y$) is a non-empty suffix (resp.~prefix) of $A_{k-1}$,
  and $x'$ (resp.~$y'$) is a non-empty suffix (resp.~prefix) of $B_{k-1}$.
  If $\alpha = x B_{k-1} y$, $\alpha$ cannot occur in $B_k A_k$ due to the constraint of $B_{k-1}$.
  If $\alpha = x' A_{k-1} y'$,
  $\alpha$ cannot occur in $B_k A_k$ due to the constraint of $A_{k-1}$
  and the difference between the last characters of $A_{k-1}$ and $x'$.
  Therefore $\alpha$ cannot occur in $B_k A_k$ in all cases.
\end{proof}

\section{Factorizations of period-doubling sequence} \label{sec:ratio}

By the definition of LZ77, the following proposition immediately holds:
\begin{proposition} \label{prop:lz-ph}
  $\lz(S_k) = (S_0, \widehat{S_0}, \widehat{S_1}, \ldots, \widehat{S_{k-1}})$ and thus $\lzph(S_k) = k+1$.
\end{proposition}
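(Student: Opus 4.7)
The plan is to induct on $k$, tracking the state of the greedy LZ77 parser phrase by phrase. The single combinatorial fact driving everything is that, by Definition~\ref{def:PD-doubling}, $\widehat{S_{j-1}}$ has the same length $2^{j-1}$ as $S_{j-1}$ and agrees with $S_{j-1}$ in every position except the last.

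First, by a quick induction on $k$ using Definition~\ref{def:PD-doubling}, I would confirm the identity $S_k = S_0 \cdot \widehat{S_0} \cdot \widehat{S_1} \cdots \widehat{S_{k-1}}$. This shows that the candidate sequence does factor $S_k$ as a concatenation; what remains is to verify that it coincides with the greedy LZ77 parse.

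For the parse, I would prove by induction on $m$ (for $0 \le m \le k-1$) the statement: after emitting $m+1$ phrases, the parser has produced $(S_0, \widehat{S_0}, \ldots, \widehat{S_{m-1}})$ and has consumed exactly $S_m$. The base case $m=0$ is $p_1 = S_0 = a$. For the inductive step, by IH the consumed prefix is $S_m$ and the remaining suffix of $S_k$ begins with $\widehat{S_m}$ (using $S_{m+1} = S_m \cdot \widehat{S_m}$). The length-$(|S_m|-1)$ prefix of $\widehat{S_m}$ equals $S_m[1..|S_m|-1]$ and hence occurs in $S_m$; but no length-$|S_m|$ prefix of the remaining text can occur in $S_m$, because the only length-$|S_m|$ substring of $S_m$ is $S_m$ itself, while $\widehat{S_m} \neq S_m$. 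Extending the matched prefix by one character, the next phrase is exactly $\widehat{S_m}$, and the parser now has consumed $S_{m+1}$, closing the induction.

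For the final phrase $p_{k+1} = \widehat{S_{k-1}}$, I would separately note that the exception clause in the LZ77 definition cannot truncate it to a shorter matching suffix, since $\widehat{S_{k-1}}$ does not occur in $S_{k-1}$ by the same length-plus-mismatch argument. Hence the parse has exactly $k+1$ phrases and $\lzph(S_k) = k+1$. I do not foresee any real obstacle; as the authors remark, the claim is immediate once one observes the single-character mismatch between $S_{j-1}$ and $\widehat{S_{j-1}}$, which does essentially all the work.
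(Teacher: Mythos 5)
Your proof is correct and is simply a careful write-out of the argument the paper treats as immediate (the paper offers no proof, asserting the proposition follows "by the definition of LZ77"). The key observation you isolate — that the only length-$|S_m|$ substring of $S_m$ is $S_m$ itself, so the greedy match stops at length $|S_m|-1$ and the next phrase is exactly $\widehat{S_m}$ — is exactly the single-character-mismatch fact that makes the claim routine.
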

In this section, we mainly discuss the LZ-End factorization of the period-doubling sequence,
and give the following result.
\begin{theorem} \label{thm:lz-end}
  $\lzeph(S_k) = 2k-f(k)$ where $f(k) = O(\log^* k)$.
\end{theorem}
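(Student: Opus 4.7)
The plan is to compute $\lzeph(S_k)$ via an explicit inductive description of $\lze(S_k)$. Since LZ-End is greedy and only the final phrase uses the special exception, the first $\lzeph(S_{k-1}) - 1$ phrases of $\lze(S_k)$ coincide with the first $\lzeph(S_{k-1}) - 1$ phrases of $\lze(S_{k-1})$. Thus in each inductive step one only has to understand (i) how the last phrase of $\lze(S_{k-1})$ extends once the suffix $\widehat{S_{k-1}}$ is appended, and (ii) how the remaining portion of $\widehat{S_{k-1}}$ is parsed. The base cases (say $k \leq 5$) can be verified by direct computation, yielding $\lzeph(S_k) = 2k-1$ for $3 \leq k \leq 5$.

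The key identity to exploit is $S_k = S_{k-1}\widehat{S_{k-1}} = A_k B_k A_k A_k$ together with $\widehat{S_{k-1}} = A_k A_k = S_{k-2} S_{k-2}$, both consequences of Lemma~\ref{lem:ABAA}. To control the sources and lengths of the new phrases I would invoke Lemmas~\ref{lem:quart-occ-basis} and~\ref{lem:quart-cyclic-occ}: they restrict every occurrence of $A_k$ and of each proper cyclic rotation of $A_k$ in $S_k$ to positions dictated by the $A_k B_k A_k A_k$ decomposition. Combined with the primitivity tools (Lemmas~\ref{lem:primitive} and~\ref{lem:primitive-square}), this means any candidate source whose length is comparable to $|A_k|$ must sit in one of a handful of places, which forces the length of the associated phrase. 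In the generic case the analysis produces exactly two new phrases in the $\widehat{S_{k-1}}$ region, giving $\lzeph(S_k) = \lzeph(S_{k-1}) + 2$.

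The $O(\log^*k)$ saving term $f(k)$ is the subtle part. It comes from exceptional levels $k$ at which a single unusually long phrase absorbs material that generically would be split across two phrases, thanks to an accidental alignment between a previously chosen phrase boundary and the end of some copy of $A_k$ inside $S_k$. I would show that whether such a saving occurs at level $k$ is governed by the structure of $\lze(S_{k'})$ for some $k'$ logarithmic in $k$, so that iterating this criterion produces a saving sequence of length at most $O(\log^*k)$.

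The main obstacle will be the detailed case analysis near the three quarter-boundaries $n_k/2$, $3n_k/4$, and $n_k$: the last-phrase exception in the definition of LZ-End interacts with the source constraints in subtle ways, and one must rule out unexpected alignments that could create extra savings at arbitrary levels. Establishing the recursive criterion that yields exactly the iterated-logarithm bound is where I expect most of the work to lie.
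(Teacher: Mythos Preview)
Your outline is essentially the paper's strategy: an inductive description of $\lze(S_k)$ showing that generically two new phrases appear per level, with rare one-phrase levels producing the $O(\log^*k)$ correction. However, your very first step contains a genuine gap.

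You write that ``since LZ-End is greedy and only the final phrase uses the special exception, the first $\lzeph(S_{k-1})-1$ phrases of $\lze(S_k)$ coincide with the first $\lzeph(S_{k-1})-1$ phrases of $\lze(S_{k-1})$.'' This is \emph{not} a consequence of greediness, and the paper explicitly flags it as non-trivial. Greediness fixes a phrase $q_i$ as soon as the candidate source $q_i[1..|q_i|-1]$ fails to extend by the next character of the \emph{remaining} string; but appending new characters to $w$ lengthens that remaining string, so a longer prefix---one that was previously unavailable simply because the string ended---may now qualify as a source. Concretely, for $S=abaababaabbabbaababa$ one has
\[
\lze(S)=a\,|\,b\,|\,aa\,|\,ba\,|\,baab\,|\,bab\,|\,baabab\,|\,a,\qquad
\lze(Saba)=a\,|\,b\,|\,aa\,|\,ba\,|\,baab\,|\,bab\,|\,baababaaba,
\]
so the $(\lzeph(S)-1)$-th phrase already changes. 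In the paper this stability property for $S_k$ is proved, not assumed: one shows that any hypothetical phrase starting before $\en(x_{k-1})$ and extending past $n_{k-1}$ would have to contain $A_{k-1}$ internally, and then Lemma~\ref{lem:quart-occ-basis} rules out any admissible source. You will need this argument (or an equivalent one) before the rest of your induction can proceed.

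Your description of the $\log^*$ mechanism is also vaguer than what is actually needed. The paper makes it completely explicit: the residual suffix $y_k$ after the two new phrases has length $|y_k|=|y_{k-1}|-1$ at generic levels, and when $|y_{k-1}|=0$ one gets a single new phrase and $|y_k|$ resets to $\tfrac{3}{16}n_k-1$. This yields the recurrence $k_m^*=k_{m-1}^*+\tfrac{3}{16}\cdot 2^{k_{m-1}^*}$ for the exceptional levels, from which $f(k)=O(\log^*k)$ is immediate. Your phrasing in terms of ``accidental alignment with the end of some copy of $A_k$'' is not quite the right picture; the trigger is simply the vanishing of the leftover $y_{k-1}$.
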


By Proposition~\ref{prop:lz-ph} and Theorem~\ref{thm:lz-end}, we can reach our goal of this paper:
\begin{corollary} \label{cor:ratio}
  There exists a family of \emph{binary} strings $w$ 
  such that the ratio $\lzeph(w)/\lzph(w)$ asymptotically approaches $2$.
\end{corollary}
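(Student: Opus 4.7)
The plan is to use the period-doubling family $\{S_k\}_{k \ge 1}$ itself as the witness. By Definition~\ref{def:PD-morphism} each $S_k$ lies in $\{a,b\}^*$, so the alphabet is already binary, and $|S_k| = 2^k \to \infty$, which is what one needs to give the word ``asymptotically'' any content. The two ingredients I need have already been lined up in the preceding sections: Proposition~\ref{prop:lz-ph} pins down $\lzph(S_k) = k+1$ exactly, and Theorem~\ref{thm:lz-end} provides $\lzeph(S_k) = 2k - f(k)$ with $f(k) = O(\log^* k)$.

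With those in hand, the corollary reduces to a single ratio computation. I would write
$$\frac{\lzeph(S_k)}{\lzph(S_k)} \;=\; \frac{2k - f(k)}{k+1},$$
and then observe that since $f(k) = O(\log^* k) = o(k)$, both the numerator and the denominator grow linearly in $k$ with leading coefficients $2$ and $1$ respectively, so the expression tends to $2$ as $k \to \infty$. There is no genuine obstacle at the level of the corollary itself: all of the combinatorial difficulty has been pushed into Theorem~\ref{thm:lz-end}, which I treat here as a black box. The only sanity check worth stating explicitly is that the additive error $f(k)$ is indeed $o(k)$ — which is immediate from $\log^* k = o(k)$ — so it cannot perturb the leading ratio.
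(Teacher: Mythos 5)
Your proposal is correct and matches the paper's argument exactly: the paper also derives the corollary directly from Proposition~\ref{prop:lz-ph} and Theorem~\ref{thm:lz-end} applied to the period-doubling sequences, with the ratio $(2k - f(k))/(k+1) \to 2$ since $f(k) = O(\log^* k) = o(k)$. Nothing is missing.
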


In the rest of this paper, we show Theorem~\ref{thm:lz-end}.
The next lemma gives the LZ-End factorization of the period-doubling sequence.
Notice that statement (I) in the lemma is not an immediate property for the LZ-End factorization
due to the next example.
Let $S = abaababaabbabbaababa$.
Then,
\begin{eqnarray*}
  \lze(S)    &=& a|b|aa|ba|baab|bab|baabab|a, \\
  \lze(Saba) &=& a|b|aa|ba|baab|bab|baababaaba.
\end{eqnarray*}

\begin{lemma} \label{lem:lzend-pd}
  For any $k \ge 5$, the following statements (I)-(IV) hold.
  \begin{enumerate}
    \renewcommand{\labelenumi}{(\Roman{enumi})}
    \item $\lze(S_k)[i] = \lze(S_{k-1})[i]$ for every $1 \le i \le \lzeph(S_{k-1})-1$.
    \item $\lzeph(S_k) \geq \lzeph(S_{k-1})+1$.
  \end{enumerate}
  Let
  \begin{eqnarray*}
    w_k &=& \lze(S_k)[\lzeph(S_{k-1})], \\
    x_k &=& \lze(S_k)[\lzeph(S_{k-1})+1], \\
    y_k &=& S_k[\en(x_k)+1..n_k]~\text{(possibly empty)}.
  \end{eqnarray*}
  \begin{enumerate}
    \renewcommand{\labelenumi}{(\Roman{enumi})}
    \setcounter{enumi}{2}
    \item
          If $w_k \ne \lzelast{S_{k-1}}$,
          \[
            |w_k| = \frac{1}{8}n_k + 1, |x_k| = \frac{3}{8}n_k, 
            |y_k| = \frac{3}{16}n_{\ell(k)} - (k - \ell(k)) - 1,
          \]
          where $\ell(k) = \max\{i \mid i \leq k, w_i = \lzelast{S_{i-1}}\}$.\\
          Otherwise (if $w_k = \lzelast{S_{k-1}}$),
          \[
            |w_k| = \frac{3}{16}n_k, |x_k| = \frac{5}{16}n_k + 1, |y_k| = \frac{3}{16}n_k - 1.
          \]
    \item If $|y_k| \geq 2$, $y_k[1..|y_k|-1]$ has another occurrence to the left
          which ends with some LZ-End phrase of $S_k$.
          Namely, $y_k$ is the last LZ-End phrase of $S_k$ if $y_k$ is not empty.
  \end{enumerate}
\end{lemma}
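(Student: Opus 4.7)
The plan is to proceed by strong induction on $k$, taking $k=5$ as the base case (read off Fig.~\ref{fig:example}, where the final two phrases of $\lze(S_5)$ are $w_5$ and $x_5$ with $y_5 = \varepsilon$) and, for the inductive step with $k\geq 6$, exploiting both recursive descriptions $S_k = S_{k-1}\cdot\widehat{S_{k-1}}$ and $S_k = A_kB_kA_kA_k$ with $A_k = S_{k-2}$ from Lemma~\ref{lem:ABAA}. The combinatorial backbone throughout is the family of occurrence-counting results of Section~\ref{sec:properties}: Lemmas~\ref{lem:quart-occ-basis} and~\ref{lem:quart-cyclic-occ} pin down where $A_k$ and its cyclic rotations may and may not sit inside the natural quarter-blocks of $S_k$, which is what will let us verify that each source we propose for a new LZ-End phrase is maximal.

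Statements (I) and (II) can be treated together by showing that the only phrase boundary of $\lze(S_{k-1})$ that can move when $\widehat{S_{k-1}}$ is appended is the final one. For each $i < \lzeph(S_{k-1})$ the greedy choice at position $\beg(\lze(S_{k-1})[i])$ is governed by (a) the set of candidate sources (suffixes of $q_1\cdots q_j$, $j<i$), which is identical in $S_{k-1}$ and $S_k$ once prior boundaries coincide, and (b) the first ``blocking'' character, which also agrees while we stay inside $S_{k-1}$. One must then rule out the absorbing behaviour illustrated by the worked example in the lemma statement; the inductive description of $\lzelast{S_{k-1}}$ together with the sparse occurrence pattern of $A_k$ from Lemma~\ref{lem:quart-cyclic-occ} forbids the creation of a long new match that would annex an earlier boundary. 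Dually, primitivity of $A_k$ (Lemma~\ref{lem:primitive}) caps the length of any prefix of $\lzelast{S_{k-1}}\cdot\widehat{S_{k-1}}$ that occurs as an LZ-End-compatible suffix, so $\widehat{S_{k-1}}$ cannot be swallowed into a single extended phrase and a fresh phrase $x_k$ must start inside it, giving (II).

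Statement (III) is the technical heart and is handled by splitting on whether $w_k=\lzelast{S_{k-1}}$. In the ``non-refresh'' case the inductive formula for $\lzelast{S_{k-1}} = y_{k-1}$, parametrised by $\ell(k-1)$, lets us identify $w_k$ as a specific substring of the $A_kB_kA_k$ prefix whose longest LZ-End-compatible source is located via Lemmas~\ref{lem:quart-occ-basis}--\ref{lem:quart-cyclic-occ}; this fixes $|w_k|=n_k/8+1$, and then the same occurrence lemmas applied to the position $\en(w_k)+1$ force $|x_k|=3n_k/8$, leaving the claimed $|y_k|$. The ``refresh'' case $w_k=\lzelast{S_{k-1}}$ is analogous but with the split shifted, yielding the second line of lengths and, crucially, refreshing the parameter $\ell$ that drives the $O(\log^* k)$ behaviour in Theorem~\ref{thm:lz-end}. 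The hard part of this step is not the arithmetic but the verification that the candidate source we single out is truly the \emph{longest} admissible one; this is exactly where the bounded-occurrence lemmas for $A_k$ and its rotations are essential, since they exclude every potentially longer match.

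Finally, for (IV) we exhibit an explicit earlier occurrence of $y_k[1..|y_k|-1]$ ending at an already-established LZ-End boundary, by using the self-similarity $S_k = A_kB_kA_kA_k$ to map the short prefix $y_k[1..|y_k|-1]$ into an earlier copy of $A_k$ whose right end meets a boundary supplied by the inductive factorisation of $S_{k-1}$. Since only one further character of $S_k$ remains after this occurrence, no phrase starting at $\beg(y_k)$ can be longer than $y_k$, so $y_k$ itself is the final phrase of $\lze(S_k)$ when non-empty; combined with (II) this pins $\lzeph(S_k)$ to $\lzeph(S_{k-1})+1$ or $\lzeph(S_{k-1})+2$ according to the case in (III), which is exactly the recursion that the subsequent proof of Theorem~\ref{thm:lz-end} will iterate.
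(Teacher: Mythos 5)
Your overall plan is the same as the paper's: induction on $k$ starting at $k=5$, using the occurrence-counting results (Lemmas~\ref{lem:quart-occ-basis} and~\ref{lem:quart-cyclic-occ}) to show each proposed source is the longest admissible one, ruling out a phrase of $\lze(S_{k-1})$ being extended past position $n_{k-1}$ for (I), and splitting (III) on whether $w_k=\lzelast{S_{k-1}}$. At the level of a plan this is faithful to the actual argument.

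However, your base case is wrong, and the error is not cosmetic. You claim $y_5=\varepsilon$, but
\[
  \lze(S_5) \;=\; a\,|\,b\,|\,aa\,|\,aba\,|\,bab\,|\,aaabaa\,|\,abaaabababa\,|\,aabab,
\]
so with $\lzeph(S_4)=6$ we have $w_5=aaabaa=\lzelast{S_4}$, $x_5=abaaabababa$, and $y_5=aabab\neq\varepsilon$. This is forced by statement (III) itself: since $w_5=\lzelast{S_4}$ we are in the ``refresh'' case, which gives $|y_5|=\frac{3}{16}n_5-1=5$. An empty $y_5$ would also contradict $\lzeph(S_5)=8=2\cdot 5-f(5)$. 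More importantly, statement (IV) at $k=5$ (that $y_5[1..4]=aaba$ occurs earlier ending at the phrase boundary after $aba$) is exactly what seeds the induction: for $k=6,\dots,10$ one is in the non-refresh case, where the source of $y_k[1..|y_k|-1]$ is obtained from the occurrence guaranteed by (IV) at level $k-1$ (one has $y_k[1..|y_k|-1]=y_{k-1}[2..|y_{k-1}|-1]$). With $y_5=\varepsilon$ that chain never starts, so your induction for (IV) — and hence for the phrase count — would fail already at $k=6$. Fix the base case and the rest of your outline aligns with the paper; note also that for the inductive step the remaining work is precisely the maximality verifications you defer to the occurrence lemmas, which in the paper occupy most of the proof (e.g., showing $w_{k'}=Wc_W$ rather than $Wu$ for some longer $u$, and that $B_{k'-1}A_{k'-1}$ occurs only once to the left when bounding $x_{k'}$).
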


\begin{proof}
  In this proof, we use $\ze_k = \lze(S_k)$ and $\zeph_k = \lzeph(S_k)$ for simplicity.
  We prove this lemma by induction on $k$.
  
  Suppose that $k=5$.
  The LZ-End factorizations of $S_4, S_5$ are
  \begin{eqnarray*}
    \ze_4 &=& a|b|aa|aba|bab|aaabaa, \\
    \ze_5 &=& a|b|aa|aba|bab|aaabaa|abaaabababa|aabab.
  \end{eqnarray*}
  Statements (I) and (II) clearly hold.
  Then, $w_5 = aaabaa, x_5 = abaaabababa, y_5 = aabab$.
  Hence, statement (III) holds since $n_5 = 32$ and $w_5 = \zelast{4}$ (i.e., the latter case).
  Statement (IV) also holds since $y_5[1..4] = aaba$ has an occurrence 
  which ends with the fourth phrase $aba$.
  
  Suppose that all the statements hold 
  for any $k \in [5, k'-1]$ for some $k'>5$.
  We show that all the statements hold for $k'$.
  Firstly, suppose on the contrary that statement (I) does not hold for $k'$.
  This implies that there exists a phrase $T = S_{k'}[\beg(\ze_{k-1}[i])..j]$ 
  for some $i < \zeph_{k'-1}$ and $j > n_{k'-1}$.
  Since $|x_{k'-1}y_{k'-1}| \geq \frac{3}{8}n_{k'-1} > \frac{1}{4}n_{k'-1}$ and $x_{k'-1}y_{k'-1}$ is a substring of $T$,
  $T$ has an internal occurrence of the length-$\frac{1}{4}n_{k'-1}$ suffix $A_{k'-1}$ of $S_{k'-1}$.
  By Lemma~\ref{lem:quart-occ-basis} (showing the occurrences of $A_{k-1}$ in $S_{k-1}$), 
  $A_{k'-1}$ occurs exactly three times in $S_{k'}[1..n_{k'-1}]$.
  The first occurrence of $A_{k'-1}$ cannot be included by a source of $T$
  since $A_{k'-1}$ is not a prefix of $T[1..|T|-1]$.
  In addition, the second occurrence of $A_{k'-1}$ also cannot be included by a source of $T$
  since the source overlaps phrase $T$.
  Thus, $T[1..|T|-1]$ cannot have another occurrence to the left as a source of $T$.
  This contradicts that $T$ is an LZ-End phrase of $S_{k'}$ at that position.
  Hence, statement (I) holds for $k'$.
  Due to statement (I), $w_{k'}$ must have $y_{k'-1}$ as a prefix.
  On the other hand, $w_{k'}$ cannot reach the end of $S_{k'}$.
  Hence, statement (II) also holds.
  Thanks to statements (I) and (II) for $k'$, three substrings $w_{k'}$, $x_{k'}$, and $y_{k'}$ are well-defined
  (see Fig.~\ref{fig:lzend-form-mismatch} and~\ref{fig:lzend-form-match} for illustrations).
  \begin{figure}[t]
    \includegraphics[width=\textwidth]{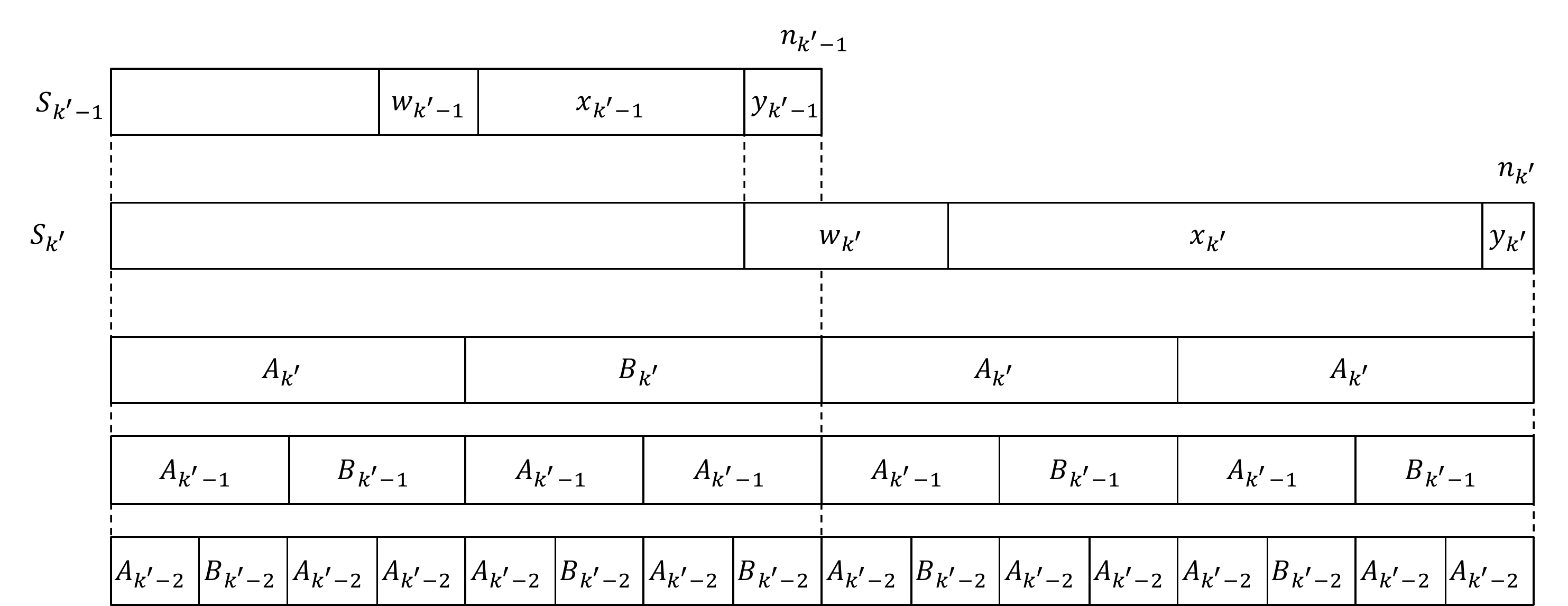}
    \caption{Illustration for the LZ-End factorization when $w_{k'} \ne \zelast{k'-1}$.} 
    \label{fig:lzend-form-mismatch}
  \end{figure}

  Next, we show statements~(III) and (IV).
  \begin{itemize}
  \item Assume that $\ell(k'-1) = \ell(k')$ (i.e., $w_{k'} \ne \zelast{k'-1}$).
        We consider a phrase $w_{k'}$.
        If $|y_{k'-1}| = 0$, $x_{k'-1}$ is the suffix of length $\frac{3}{8}n_{k'-1}$ of $S_{k'-1}$, 
        i.e., $x_{k'-1} = B_{k'-2} A_{k'-1}$.
        From Lemma~\ref{lem:quart-occ-basis}, $x_{k'-1}$ does not have other occurrences to the left.
        This implies that $w_{k'} = x_{k'-1}$.
        This contradicts to $w_{k'} \ne \zelast{k'-1}$.
        Thus, $|y_{k'-1}| > 0$ holds.
        Namely, $x_{k'-1} = \ze_{k'-1}[\zeph_{k'-1}-1]$ and $y_{k'-1} = \zelast{k'-1}$~(see also Fig.~\ref{fig:lzend-form-mismatch}).
        Let $W$ be the string of length $\frac{1}{8}n_{k'}$ which begins at $\beg(\zelast{k'-1})$.
        $\ell(k'-1) = \ell(k')$ also implies that $\ell(k'-1) < k'$.
        Hence, $|y_{k'-1}| < \frac{3}{16}n_{\ell(k'-1)} \leq \frac{3}{32}n_{k'} < \frac{1}{8}n_{k'}$.
        This fact means that $W$ is a proper cyclic rotation of $A_{k'-1}$.
        By Lemma~\ref{lem:quart-cyclic-occ}, $W$ occurs twice to the left 
        (one is in $A_{k'-1} B_{k'-1}$, the other is in $A_{k'-1} A_{k'-1}$).
        Since the second occurrence ends with phrase $\ze_{k'}[\zeph_{k'-1}-1]$,
        $W c_W$ is a candidate of phrase $w_{k'}$ where $c_W$ is the character preceded by $W$.
        Assume on the contrary that a source of phrase $w_{k'}$ is $Wu$ for some $u \in \Sigma^+$
        (see Fig.~\ref{fig:lzend-form-w}).
        \begin{figure}[t]
          \includegraphics[width=\textwidth]{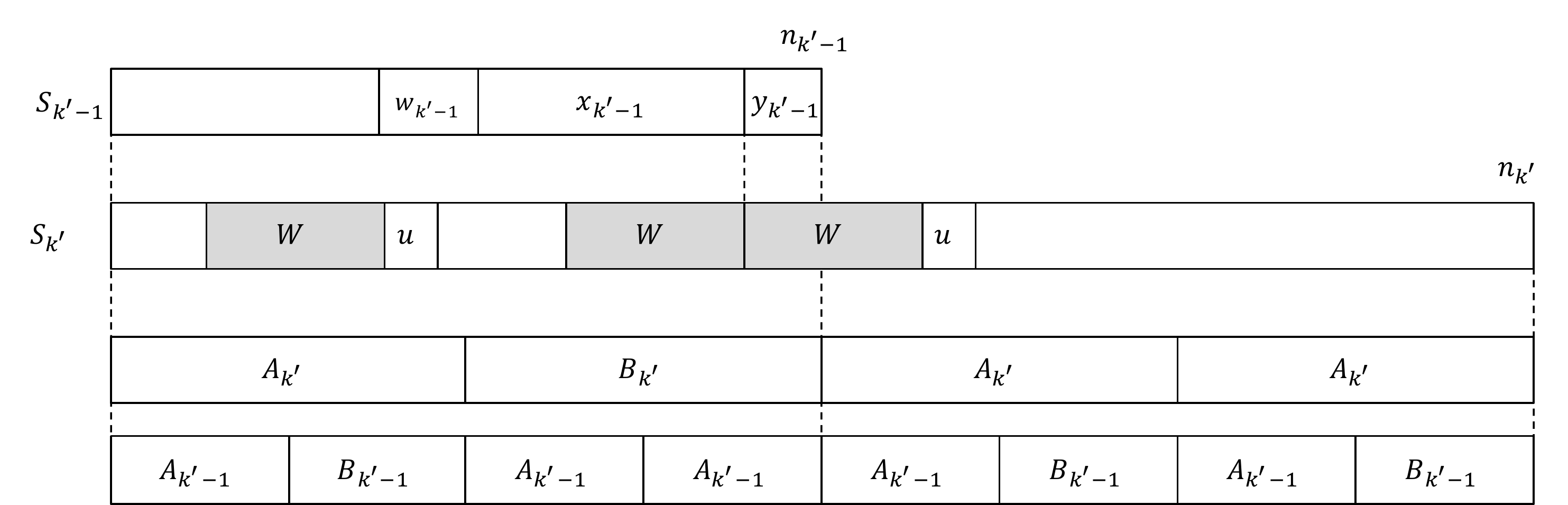}
          \caption{Illustration for a part of the proof.
          $W$ is a candidate of a source of phrase $w_k'$.}
          \label{fig:lzend-form-w}
        \end{figure}
        The second occurrence of $W$ cannot be the beginning position of a source of $w_{k'}$ 
        since $Wu$ overlaps $w_{k'}$.
        Hence, the only candidate of the beginning position of source $Wu$ is in the first $A_{k'-1} B_{k'-1}$.
        Moreover, $Wu$ cannot contain $B_{k'-1}$ since the original $Wu$ occurs in $A_{k'-1} A_{k'-1} \cdots$.
        Thus, $Wu$ is a proper substring of $A_{k'-1} A_{k'-1}$ and $A_{k'-1} B_{k'-1}$.
        In other words, $u'Wu$ is a proper prefix of $A_{k'-1} A_{k'-1}$ and $A_{k'-1} B_{k'-1}$ for some $u'$.
        Since $x_{k'-1}$ is a proper substring of $A_{k'-1} A_{k'-1}$,
        $x_{k'-1}$ also occurs in $u'Wu$.
        Hence, this contradicts that phrase $x_{k'-1}$ ends with $W$
        (i.e., $x_{k'-1}$ has to be a longer phrase.), 
        and then, $w_{k'} = Wc_W$.
        Next, we consider a phrase $x_{k'}$.
        By the definition of the period-doubling sequence, 
        there exists a clear candidate $X$ of a source which ends at
        $\en(x_{k'-1})$~(see Fig.~\ref{fig:lzend-form-x}).
        \begin{figure}[t]
          \includegraphics[width=\textwidth]{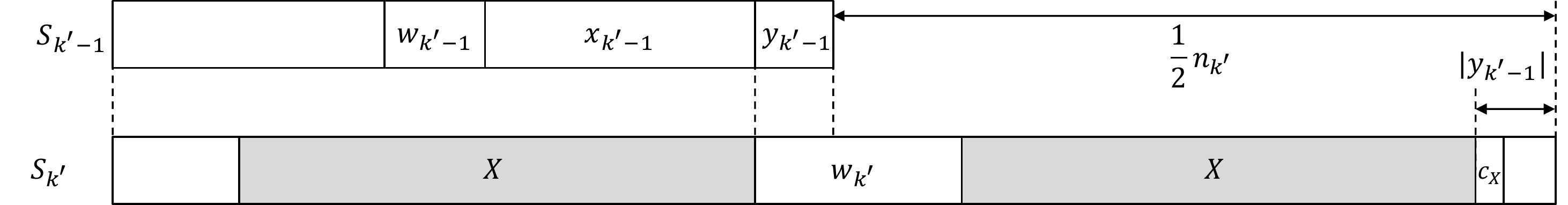}
          \caption{Illustration for a part of the proof.
          $X$ is a candidate of a source of phrase $x_k'$.}
          \label{fig:lzend-form-x}
        \end{figure}
        Then, an equation
        $|y_{k'-1}| + \frac{1}{2}n_{k'} = |w_{k'}| + |X| + |y_{k'-1}|$ stands
        w.r.t. the length of suffix $S_{k'}[\beg(y_{k'-1})..n_{k'}]$.
        Thus, $|X| = \frac{3}{8}n_{k'}-1$ holds 
        since $|w_{k'}| = \frac{1}{8}n_{k'}+1$.
        This implies that $X$ has $B_{k'-1}A_{k'-1}$ as a substring.
        There does not exist a longer candidate 
        since $B_{k'-1}A_{k'-1}$ has only one occurrence to the left.
        Hence, $x_{k'} = Xc_X$ where $c_X$ is the character preceded by $X$.
        Finally, we consider the suffix $y_{k'}$ of $S_{k'}$.
        If $|y_{k'}| \ge 2$,
        from the above discussion, $y_{k'-1}[2..|y_{k'-1}|-1] = y_{k'}[1..|y_{k'}|-1]$ holds.
        Since $y_{k'-1}[2..|y_{k'-1}|-1]$ has an occurrence 
        to the left which ends with some phrase ($\because$ statement (IV) for $k'-1$),
        $y_{k'}[1..|y_{k'}|-1]$ too.
        Therefore, statements (III) and (IV) also hold.
  \item Assume that $\ell(k'-1) \neq \ell(k')$ (i.e., $w_{k'} = \zelast{k'-1}$).
        We can show that all the statements also hold for this case in a similar way.
        If we assume $|y_{k'-1}|>0$, then $|w_{k'}| > |y_{k'-1}|$ holds by the above discussions.
        This contradicts that $w_{k'} = \zelast{k'-1}$, and hence, $|y_{k'-1}|=0$ and $w_{k'} = x_{k'-1}$ hold (see Fig.~\ref{fig:lzend-form-match}).
\begin{figure}[t]
  \includegraphics[width=\textwidth]{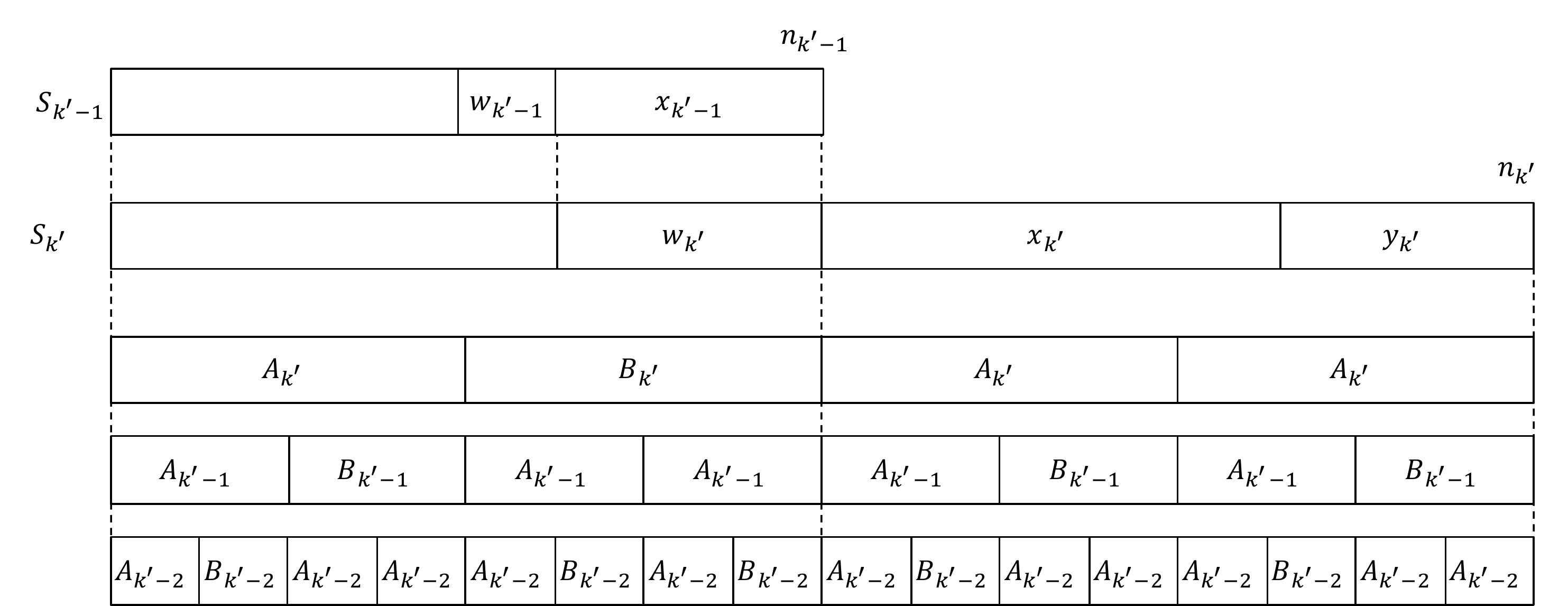}
  \caption{Illustration for the LZ-End factorization when $w_{k'} = \zelast{k'-1}$.} 
  \label{fig:lzend-form-match}
\end{figure}
        Hence, $|w_{k'}| = |x_{k'-1}| = \frac{3}{8}n_{k'-1} = \frac{3}{16}n_{k'}$.
        We consider a phrase $x_{k'}$ that begins at position $\frac{1}{2}n_{k'} + 1$.
        Let $X' = S_{k'}[1.. \en(w_{k'-1})]$ be a clear candidate of a source of $x_{k'}$.
        Since $|X'| = \frac{1}{2}n_{k'} - \frac{3}{16}n_{k'} = \frac{5}{16}n_{k'}$, $X'$ has $A_k'$ as a prefix.
        From Lemma~\ref{lem:quart-occ-basis}, 
        $X'$ is the only candidate of a source, and thus $x_{k'} = X' c_{X'}$ where $c_{X'} = S_{k'}[\frac{13}{16}n_{k'}+1]$ is the character preceded by $X'$.
        Moreover, the length of $y_{k'}$ is $\frac{1}{2}n_{k'} - (\frac{5}{16}n_{k'} + 1) = \frac{3}{16}n_{k'}-1$.
        Since $|y_{k'}| = |w_{k'}| -1$ and phrase $w_{k'}$ is a suffix of $S_{k'-1}$, a source of $w_{k'}$ can be also a source of $y_{k'}$.
        Namely, $y_{k'}$ is the last phrase.
        Thus, all the statements also hold for this case.
  \end{itemize}
  Therefore, this lemma holds.
\end{proof}

We have just finished showing the form of the LZ-End factorization of $S_k$.
Now, we will analyze the number of phrases of the factorization.
Let $\mathcal{K}$ be the sequence of integers $k$ 
which satisfies $\ell(k) = k$.
Let $k_m^*$ denotes the $m$-th smallest integer in $\mathcal{K}$.
Each $k_m^*$ can be represented by the following recurrence formula:
\begin{lemma} \label{lem:re-formula}
  \begin{equation*}
    k_1^* = 5 \text{ and } k_{m}^* = k_{m-1}^* + \frac{3}{16} \cdot 2^{k_{m-1}^*}\text{ for } m \ge 2.
  \end{equation*}
\end{lemma}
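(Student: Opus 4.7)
The plan is to prove the recurrence by tracking how the length $|y_k|$ of the trailing suffix evolves between consecutive values of $k$, and to identify the event $|y_{k-1}| = 0$ as the precise trigger for entering the ``match'' case $k \in \mathcal{K}$.

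First I would dispose of the base case: by the explicit computation of $\lze(S_5)$ carried out inside the proof of Lemma~\ref{lem:lzend-pd}, we have $w_5 = \zelast{4}$, so $5 \in \mathcal{K}$. Moreover, the lemma is invoked only for $k \geq 5$, and one can verify by direct inspection of $\ze_1,\dots,\ze_4$ (as displayed in the base case of the induction) that none of $k \in \{1,2,3,4\}$ satisfy $w_k = \zelast{k-1}$. Hence $k_1^* = 5$.

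For the inductive step I would argue as follows. Suppose $k_{m-1}^* \in \mathcal{K}$. Applying statement (III) of Lemma~\ref{lem:lzend-pd} in the match case at $k_{m-1}^*$ gives $|y_{k_{m-1}^*}| = \frac{3}{16} n_{k_{m-1}^*} - 1$, which is strictly positive. Hence, by statement (IV), the suffix $y_{k_{m-1}^*}$ itself forms the last LZ-End phrase of $S_{k_{m-1}^*}$, so at step $k_{m-1}^* + 1$ we are in the mismatch case and $\ell(k_{m-1}^*+1) = k_{m-1}^*$. Iterating the mismatch formula from statement (III), as long as $\ell$ stays equal to $k_{m-1}^*$ we have
\[
  |y_k| \;=\; \tfrac{3}{16} n_{k_{m-1}^*} - (k - k_{m-1}^*) - 1,
\]
so $|y_k|$ decreases by exactly one each step. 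The match case at step $k$ requires precisely $|y_{k-1}| = 0$ (as shown in the second bullet of the inductive proof of Lemma~\ref{lem:lzend-pd}), and this first occurs when $k - 1 - k_{m-1}^* = \frac{3}{16} n_{k_{m-1}^*} - 1$, i.e. $k = k_{m-1}^* + \frac{3}{16}\cdot 2^{k_{m-1}^*}$. For every strictly smaller $k > k_{m-1}^*$ we have $|y_{k-1}| > 0$, so the mismatch case persists and $k \notin \mathcal{K}$. This identifies $k_m^*$ as claimed.

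The main technical point to handle carefully will be maintaining the invariant $\ell(k) = k_{m-1}^*$ throughout the entire mismatch stretch $k_{m-1}^* < k < k_m^*$: one must check that $\ell$ does not accidentally update because of some spurious intermediate match. This is handled by the observation that the mismatch formula for $|y_k|$ is monotone strictly decreasing in $k$ by a unit each step, while a match would require $|y_{k-1}| = 0$; since the values $\frac{3}{16} n_{k_{m-1}^*} - 1, \frac{3}{16} n_{k_{m-1}^*} - 2, \ldots, 1$ are all positive, no earlier match can occur, and the recurrence is established.
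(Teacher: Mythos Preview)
Your proof is correct and follows essentially the same approach as the paper: both track the sequence $|y_k|$, use that it decreases by exactly one per step in the mismatch regime, and identify the next element of $\mathcal{K}$ as the first $k$ with $|y_{k-1}|=0$. The paper's argument is terser (it simply cites ``the discussion of the proof for the previous lemma'' for $|y_{i-1}|-1=|y_i|$ and $|y_{k_m^*-1}|=0$, then telescopes), while you spell out the invariant $\ell(k)=k_{m-1}^*$ and explicitly rule out a premature match; your added care is sound but not a genuinely different route. One minor remark: your check of $k\in\{1,2,3,4\}$ is unnecessary, since $w_k$ (and hence $\ell$, and hence membership in $\mathcal{K}$) is only defined for $k\ge 5$ in the paper.
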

\begin{proof}
  Let $m$ be an integer greater than one.
  By the discussion of the proof for the previous lemma,
  $|y_{i-1}|-1 = |y_i|$ holds for any integer $i \in [k_{m-1}^*+1, k_{m}^*-1]$.
  In addition, $|y_{k_{m}^*-1}| = 0$.
  Hence, 
  \[
    k_{m}^* = k_{m-1}^* + |y_{k_{m-1}^*}| + 1 
            = k_{m-1}^* + \frac{3}{16}n_{k_{m-1}^*}
            = k_{m-1}^* + \frac{3}{16} \cdot 2^{k_{m-1}^*}.
  \]
\end{proof}

\begin{lemma} \label{lem:lzend-ph}
  For any $k \geq 5$,
  \begin{equation*}
    \lzeph(S_k) = 2k - f(k),
  \end{equation*}
  where $f(k)$ is a function such that $f(k) = m+1$ if $k \in [k_m^*-1, k_{m+1}^*-2]$.
\end{lemma}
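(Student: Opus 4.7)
The plan is to prove the identity by induction on $k$, bootstrapping from Lemma~\ref{lem:lzend-pd} and Lemma~\ref{lem:re-formula}. I would begin with the base case $k = 5$, reading it off from the explicit factorization $\lze(S_5) = a|b|aa|aba|bab|aaabaa|abaaabababa|aabab$ exhibited in the proof of Lemma~\ref{lem:lzend-pd}: it has $8$ phrases, while $k_1^* = 5$ and $k_2^* = 5 + \frac{3}{16}\cdot 32 = 11$, so $5 \in [k_1^* - 1,\,k_2^* - 2]$, giving $f(5) = 2$ and $2\cdot 5 - f(5) = 8$, as required.

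For the inductive step, I would first extract the recursion
\[
  \lzeph(S_k) = \lzeph(S_{k-1}) + 1 + [y_k \neq \varepsilon]
\]
directly from Lemma~\ref{lem:lzend-pd}: statement~(I) makes the first $\lzeph(S_{k-1}) - 1$ phrases of $\lze(S_k)$ agree with those of $\lze(S_{k-1})$; the phrases $w_k$ and $x_k$ contribute two new ones; and by statement~(IV) the tail $y_k$ adds exactly one further phrase if it is non-empty and none otherwise.

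Next I would pinpoint exactly when $|y_k| = 0$. Lemma~\ref{lem:lzend-pd}(III) shows that if $w_k = \lzelast{S_{k-1}}$ (equivalently $k \in \mathcal{K}$, i.e.\ $k = k_m^*$ for some $m$), then $|y_k| = \frac{3}{16}n_k - 1 > 0$; otherwise $|y_k| = \frac{3}{16}n_{\ell(k)} - (k - \ell(k)) - 1$, which vanishes iff $k = \ell(k) + \frac{3}{16}n_{\ell(k)} - 1$. Writing $\ell(k) = k_m^*$ and invoking Lemma~\ref{lem:re-formula}, this condition becomes $k = k_{m+1}^* - 1$. So $y_k$ is empty precisely at the indices $k_{m+1}^* - 1$ for $m \geq 1$.

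Finally I would telescope. Setting $\Delta(k) = \lzeph(S_k) - \lzeph(S_{k-1})$ and $\delta(k) = f(k) - f(k-1)$, the previous step gives $\Delta(k) = 1$ when $k = k_{m+1}^* - 1$ and $\Delta(k) = 2$ otherwise, while the stated definition of $f$ makes $f(k)$ jump by $+1$ at exactly the same indices. Hence $\Delta(k) + \delta(k) = 2$ for every $k \geq 6$, and together with the base case this telescopes to $\lzeph(S_k) = 2k - f(k)$. I do not anticipate a serious obstacle---all of the combinatorial content lives already in Lemma~\ref{lem:lzend-pd}---but one point requiring a line of care is verifying that in Case~B ($k = k_m^*$) the tail $y_k$ is indeed non-empty, so that no extra subtraction sneaks in there; this follows since $k \geq k_1^* = 5$ makes $\frac{3}{16}n_k - 1 \geq 5 > 0$.
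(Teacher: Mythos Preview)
Your proposal is correct and follows essentially the same approach as the paper: both extract the increment recursion $\lzeph(S_k)=\lzeph(S_{k-1})+1+[y_k\neq\varepsilon]$ from Lemma~\ref{lem:lzend-pd}, identify that $y_k=\varepsilon$ exactly when $k+1\in\mathcal{K}$ (equivalently $k=k_{m+1}^*-1$), and telescope from the base $\lzeph(S_5)=8$. Your write-up is simply more explicit about deriving the condition $|y_k|=0$ from statement~(III) together with Lemma~\ref{lem:re-formula}, whereas the paper asserts this equivalence in passing.
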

\begin{proof}
  By Lemma~\ref{lem:lzend-pd},
  if $|y_k| = 0$ (i.e., $k+1 \in \mathcal{K}$),
  then $\lzeph(S_k) = \lzeph(S_{k-1}) + 1$ holds,
  otherwise, $\lzeph(S_k) = \lzeph(S_{k-1}) + 2$ holds.
  Hence, for any $k \in [k_m^*-1, k_{m+1}^*-2]$, 
    \[
      \lzeph(S_k) = \lzeph(S_5) + 2(k-5) - (m-1) = 2k - (m+1) = 2k - f(k).
    \]
\end{proof}

\begin{lemma}\label{lem:logstar}
  $f(k) = O(\log^*k)$.
\end{lemma}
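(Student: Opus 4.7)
The plan is to exploit the tower-of-exponentials growth of $k_m^*$ given by Lemma~\ref{lem:re-formula} and then invert. By the definition of $f$, if $f(k)=m+1$ then $k_m^*-1\le k$, so it suffices to show that the largest index $m$ with $k_m^*\le k+1$ is $O(\log^* k)$.

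First I would drop the linear term in the recurrence and use the crude single-exponential bound
\[
  k_m^*\;\ge\;\tfrac{3}{16}\cdot 2^{k_{m-1}^*}\;\ge\;2^{k_{m-1}^*-3}\qquad(m\ge 2),
\]
which upon taking $\log_2$ becomes $k_{m-1}^*\le \log_2 k_m^* + 3$. This reduces the problem to iterating a one-step logarithm recurrence rather than analysing the original two-term sum.

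Next I would iterate this inequality. Writing $\log_2^{(i)}$ for the $i$-fold iterated binary logarithm, I claim that for a universal constant $C$ (say $C=4$) and all $i$ with $1\le i\le m-1$,
\[
  k_{m-i}^*\;\le\;\log_2^{(i)}(k_m^*)+C.
\]
The induction step is immediate from $\log_2(x+C)\le \log_2 x+1$ once $\log_2^{(i)}(k_m^*)$ exceeds a fixed threshold, so the additive error does not accumulate: every iteration takes one more logarithm, which damps any bounded constant back to a bounded constant. Setting $i=m-1$ and using $k_1^*=5$ yields $5\le \log_2^{(m-1)}(k_m^*)+C$, hence $\log_2^{(m-1)}(k_m^*)\ge 5-C$, so $m-1\le \log^*(k_m^*)+O(1)$.

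Finally, substituting $k_m^*\le k+1$ gives $m\le \log^*(k+1)+O(1)=O(\log^* k)$, and therefore $f(k)=m+1=O(\log^* k)$. The only delicate point is the one flagged in the second paragraph above — ensuring that the $+3$ additive slack in the inverted recurrence does not blow up after being iterated $m-1$ times. This is exactly what the stability lemma $\log_2(x+C)\le\log_2 x+1$ accomplishes, and is the one non-routine step of the argument.
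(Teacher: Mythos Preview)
Your argument is correct and is essentially the same as the paper's: both observe via Lemma~\ref{lem:re-formula} that $k_m^*$ dominates a tower of exponentials of height $m-1$, and then invert to bound $m$ by $\log^* k$; you just carry out the inversion explicitly by iterating the inequality $k_{m-1}^*\le \log_2 k_m^*+3$, whereas the paper states the tower bound directly. The only looseness is that your displayed claim for all $i\le m-1$ needs the threshold $\log_2^{(i)}(k_m^*)\ge C$, but once that fails you are already within $O(1)$ iterations of $\log^*(k_m^*)$, so the conclusion stands.
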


\begin{proof}
  By Lemma~\ref{lem:re-formula},
    \[
      k_m^* = O(2^{k_{m-1}^{*}}) \subseteq O\left(2^{2^{\iddots^{2^{k_{1}^*}}}}\right).
    \]
  Thus, $m = O(\log^*k)$ holds.
  This implies that $f(k) = O(\log^*k)$ by Lemma~\ref{lem:lzend-ph}.
\end{proof}
By Lemmas~\ref{lem:lzend-ph} and \ref{lem:logstar}, Theorem~\ref{thm:lz-end} holds.
\section{Conclusions and further work}

Let $z'$ and $z$ be the number of phrases
in the LZ-End and LZ77 factorizations in a string.
In this paper, we proved that the approximation ratio $z'/z$
of LZ-End to LZ77 is asymptotically 2 for the period-doubling sequences.
This significantly reduces the number of distinct characters
needed to achieve such a lower bound from $n/3$ 
(in the existing work~\cite{DBLP:journals/tcs/KreftN13}) to $2$ (in this work).
We believe that our work initiates
analysis of theoretical performance of LZ-End compression.

A lot of interesting further work remains for LZ-End,
including the following:

\begin{itemize}
\item Is our lower bound for the approximation ratio tight? Kreft and Navarro~\cite{DBLP:journals/tcs/KreftN13} conjectured that $z' / z \leq 2$ holds for \emph{any} string. We performed some exhaustive experiments on binary strings and the result supports their conjecture.

\item Is the size $z'$ of the LZ-End factorization a lower bound for the size $g$ of the smallest grammar generating the input string? It is known that the size of the C-factorization~\cite{crochemore1981optimal}, a variant of LZ77, is a lower bound of $g$~\cite{DBLP:journals/tcs/Rytter03,DBLP:journals/tit/CharikarLLPPSS05}. In particular case of the period-doubling sequences, there exists the following small SLP (i.e., grammar in the Chomsky normal form) generating the $k$-th period-doubling sequence:
$S_k  = S_{k-1} T_k$, $T_k  = S_{k-2} S_{k-2}$, \ldots, $S_1  = ab$, $S_0 = a$.
Following~\cite{DBLP:journals/tcs/Rytter03},
the size of an SLP is evaluated by the number of productions
and thus the above grammar is of size $2k+1$.
It is quite close to the size of the LZ-End factorization
which is $2k-O(\log^* k)$ but is slightly larger.

\item Interesting relationships between the size of the C-factorization
  and other string repetitive measures such as
  the size $r$ of the run-length BWT~\cite{Burrows94ablock-sorting},
  the size $s$ of the smallest run-length SLP~\cite{DBLP:journals/dam/NishimotoIIBT20},
  the size $\ell$ of the Lyndon factorization~\cite{10.2307/1970044},
  the size $b$ of the smallest bidirectional scheme~\cite{DBLP:journals/jacm/StorerS82},
  the size $\gamma$ of the smallest string attractor~\cite{DBLP:conf/stoc/KempaP18},
  the substring complexity $\delta$~\cite{DBLP:journals/talg/ChristiansenEKN21},
  have been considered in the literature~\cite{DBLP:journals/jda/BilleGGP18,DBLP:conf/stacs/KarkkainenKNPS17,DBLP:conf/focs/KempaK20,DBLP:conf/latin/KociumakaNP20,DBLP:conf/spire/KutsukakeMNIBT20,DBLP:journals/tit/NavarroOP21,DBLP:conf/cpm/UrabeNIBT19}.
  Can we extend these results to the LZ-End?
\end{itemize}

\section*{Acknowledgments}
This work was supported by JSPS KAKENHI Grant Numbers 
JP20J11983 (TM), 
JP20J21147 (MF),
JP18K18002 (YN), JP21K17705 (YN), 
JP18H04098 (MT), JP20H05964 (MT),
and by JST PRESTO Grant Number JPMJPR1922 (SI).

\clearpage

\bibliographystyle{abbrv}
\bibliography{ref}

\begin{thebibliography}{10}

\bibitem{allouche_shallit_2003}
J.-P. Allouche and J.~Shallit.
\newblock {\em Automatic Sequences: Theory, Applications, Generalizations}.
\newblock Cambridge University Press, 2003.

\bibitem{DBLP:conf/dcc/BelazzouguiGGKO15}
D.~Belazzougui, T.~Gagie, P.~Gawrychowski, J.~K{\"{a}}rkk{\"{a}}inen, A.~O.
  Pereira, S.~J. Puglisi, and Y.~Tabei.
\newblock Queries on {LZ}-bounded encodings.
\newblock In A.~Bilgin, M.~W. Marcellin, J.~Serra{-}Sagrist{\`{a}}, and J.~A.
  Storer, editors, {\em 2015 Data Compression Conference, {DCC} 2015, Snowbird,
  UT, USA, April 7-9, 2015}, pages 83--92. {IEEE}, 2015.

\bibitem{DBLP:conf/mfcs/BerstelS06}
J.~Berstel and A.~Savelli.
\newblock {Crochemore} factorization of {Sturmian} and other infinite words.
\newblock In R.~Kralovic and P.~Urzyczyn, editors, {\em Mathematical
  Foundations of Computer Science 2006, 31st International Symposium, {MFCS}
  2006, Star{\'{a}} Lesn{\'{a}}, Slovakia, August 28-September 1, 2006,
  Proceedings}, volume 4162 of {\em Lecture Notes in Computer Science}, pages
  157--166. Springer, 2006.

\bibitem{DBLP:journals/jda/BilleGGP18}
P.~Bille, T.~Gagie, I.~L. G{\o}rtz, and N.~Prezza.
\newblock A separation between {RLSLPs} and {LZ77}.
\newblock {\em J. Discrete Algorithms}, 50:36--39, 2018.

\bibitem{Burrows94ablock-sorting}
M.~Burrows and D.~Wheeler.
\newblock A block-sorting lossless data compression algorithm.
\newblock Technical report, DIGITAL SRC RESEARCH REPORT, 1994.

\bibitem{DBLP:journals/tit/CharikarLLPPSS05}
M.~Charikar, E.~Lehman, D.~Liu, R.~Panigrahy, M.~Prabhakaran, A.~Sahai, and
  A.~Shelat.
\newblock The smallest grammar problem.
\newblock {\em {IEEE} Trans. Inf. Theory}, 51(7):2554--2576, 2005.

\bibitem{10.2307/1970044}
K.~T. Chen, R.~H. Fox, and R.~C. Lyndon.
\newblock Free differential calculus, {IV}. the quotient groups of the lower
  central series.
\newblock {\em Annals of Mathematics}, 68(1):81--95, 1958.

\bibitem{DBLP:journals/talg/ChristiansenEKN21}
A.~R. Christiansen, M.~B. Ettienne, T.~Kociumaka, G.~Navarro, and N.~Prezza.
\newblock Optimal-time dictionary-compressed indexes.
\newblock {\em {ACM} Trans. Algorithms}, 17(1):8:1--8:39, 2021.

\bibitem{crochemore1981optimal}
M.~Crochemore.
\newblock An optimal algorithm for computing the repetitions in a word.
\newblock {\em Information Processing Letters}, 12(5):244 -- 250, 1981.

\bibitem{DBLP:journals/tcs/DoJSS14}
H.~H. Do, J.~Jansson, K.~Sadakane, and W.~Sung.
\newblock Fast relative {Lempel}-{Ziv} self-index for similar sequences.
\newblock {\em Theor. Comput. Sci.}, 532:14--30, 2014.

\bibitem{DBLP:conf/lata/GagieGKNP12}
T.~Gagie, P.~Gawrychowski, J.~K{\"{a}}rkk{\"{a}}inen, Y.~Nekrich, and S.~J.
  Puglisi.
\newblock A faster grammar-based self-index.
\newblock In A.~Dediu and C.~Mart{\'{\i}}n{-}Vide, editors, {\em Language and
  Automata Theory and Applications - 6th International Conference, {LATA} 2012,
  {A} Coru{\~{n}}a, Spain, March 5-9, 2012. Proceedings}, volume 7183 of {\em
  Lecture Notes in Computer Science}, pages 240--251. Springer, 2012.

\bibitem{DBLP:conf/latin/GagieGKNP14}
T.~Gagie, P.~Gawrychowski, J.~K{\"{a}}rkk{\"{a}}inen, Y.~Nekrich, and S.~J.
  Puglisi.
\newblock {LZ77}-based self-indexing with faster pattern matching.
\newblock In A.~Pardo and A.~Viola, editors, {\em {LATIN} 2014: Theoretical
  Informatics - 11th Latin American Symposium, Montevideo, Uruguay, March 31 -
  April 4, 2014. Proceedings}, volume 8392 of {\em Lecture Notes in Computer
  Science}, pages 731--742. Springer, 2014.

\bibitem{LZD}
K.~Goto, H.~Bannai, S.~Inenaga, and M.~Takeda.
\newblock {LZD} factorization: Simple and practical online grammar compression
  with variable-to-fixed encoding.
\newblock In F.~Cicalese, E.~Porat, and U.~Vaccaro, editors, {\em Combinatorial
  Pattern Matching - 26th Annual Symposium, {CPM} 2015, Ischia Island, Italy,
  June 29 - July 1, 2015, Proceedings}, volume 9133 of {\em Lecture Notes in
  Computer Science}, pages 219--230. Springer, 2015.

\bibitem{DBLP:conf/stacs/KarkkainenKNPS17}
J.~K{\"{a}}rkk{\"{a}}inen, D.~Kempa, Y.~Nakashima, S.~J. Puglisi, and A.~M.
  Shur.
\newblock On the size of {Lempel}-{Ziv} and {Lyndon} factorizations.
\newblock In H.~Vollmer and B.~Vall{\'{e}}e, editors, {\em 34th Symposium on
  Theoretical Aspects of Computer Science, {STACS} 2017, March 8-11, 2017,
  Hannover, Germany}, volume~66 of {\em LIPIcs}, pages 45:1--45:13. Schloss
  Dagstuhl - Leibniz-Zentrum f{\"{u}}r Informatik, 2017.

\bibitem{DBLP:conf/focs/KempaK20}
D.~Kempa and T.~Kociumaka.
\newblock Resolution of the {Burrows-Wheeler} transform conjecture.
\newblock In {\em 61st {IEEE} Annual Symposium on Foundations of Computer
  Science, {FOCS} 2020, Durham, NC, USA, November 16-19, 2020}, pages
  1002--1013. {IEEE}, 2020.

\bibitem{DBLP:conf/dcc/KempaK17}
D.~Kempa and D.~Kosolobov.
\newblock {LZ-End} parsing in compressed space.
\newblock In A.~Bilgin, M.~W. Marcellin, J.~Serra{-}Sagrist{\`{a}}, and J.~A.
  Storer, editors, {\em 2017 Data Compression Conference, {DCC} 2017, Snowbird,
  UT, USA, April 4-7, 2017}, pages 350--359. {IEEE}, 2017.

\bibitem{DBLP:conf/esa/KempaK17}
D.~Kempa and D.~Kosolobov.
\newblock {LZ-End} parsing in linear time.
\newblock In K.~Pruhs and C.~Sohler, editors, {\em 25th Annual European
  Symposium on Algorithms, {ESA} 2017, September 4-6, 2017, Vienna, Austria},
  volume~87 of {\em LIPIcs}, pages 53:1--53:14. Schloss Dagstuhl -
  Leibniz-Zentrum f{\"{u}}r Informatik, 2017.

\bibitem{DBLP:conf/stoc/KempaP18}
D.~Kempa and N.~Prezza.
\newblock At the roots of dictionary compression: string attractors.
\newblock In I.~Diakonikolas, D.~Kempe, and M.~Henzinger, editors, {\em
  Proceedings of the 50th Annual {ACM} {SIGACT} Symposium on Theory of
  Computing, {STOC} 2018, Los Angeles, CA, USA, June 25-29, 2018}, pages
  827--840. {ACM}, 2018.

\bibitem{DBLP:conf/latin/KociumakaNP20}
T.~Kociumaka, G.~Navarro, and N.~Prezza.
\newblock Towards a definitive measure of repetitiveness.
\newblock In Y.~Kohayakawa and F.~K. Miyazawa, editors, {\em {LATIN} 2020:
  Theoretical Informatics - 14th Latin American Symposium, S{\~{a}}o Paulo,
  Brazil, January 5-8, 2021, Proceedings}, volume 12118 of {\em Lecture Notes
  in Computer Science}, pages 207--219. Springer, 2020.

\bibitem{DBLP:journals/algorithmica/KosolobovVNP20}
D.~Kosolobov, D.~Valenzuela, G.~Navarro, and S.~J. Puglisi.
\newblock {Lempel}-{Ziv}-like parsing in small space.
\newblock {\em Algorithmica}, 82(11):3195--3215, 2020.

\bibitem{DBLP:journals/tcs/KreftN13}
S.~Kreft and G.~Navarro.
\newblock On compressing and indexing repetitive sequences.
\newblock {\em Theor. Comput. Sci.}, 483:115--133, 2013.

\bibitem{DBLP:conf/spire/KuruppuPZ10}
S.~Kuruppu, S.~J. Puglisi, and J.~Zobel.
\newblock Relative {Lempel-Ziv} compression of genomes for large-scale storage
  and retrieval.
\newblock In E.~Ch{\'{a}}vez and S.~Lonardi, editors, {\em String Processing
  and Information Retrieval - 17th International Symposium, {SPIRE} 2010, Los
  Cabos, Mexico, October 11-13, 2010. Proceedings}, volume 6393 of {\em Lecture
  Notes in Computer Science}, pages 201--206. Springer, 2010.

\bibitem{DBLP:conf/spire/KutsukakeMNIBT20}
K.~Kutsukake, T.~Matsumoto, Y.~Nakashima, S.~Inenaga, H.~Bannai, and M.~Takeda.
\newblock On repetitiveness measures of {Thue}-{Morse} words.
\newblock In C.~Boucher and S.~V. Thankachan, editors, {\em String Processing
  and Information Retrieval - 27th International Symposium, {SPIRE} 2020,
  Orlando, FL, USA, October 13-15, 2020, Proceedings}, volume 12303 of {\em
  Lecture Notes in Computer Science}, pages 213--220. Springer, 2020.

\bibitem{Karkkainen96lempel-zivparsing}
J.~Kärkkäinen and E.~Ukkonen.
\newblock {Lempel}-{Ziv} parsing and sublinear-size index structures for string
  matching (extended abstract).
\newblock In {\em Proc. 3rd South American Workshop on String Processing
  (WSP'96}, pages 141--155. Carleton University Press, 1996.

\bibitem{lothaire2005applied}
M.~Lothaire.
\newblock {\em Applied combinatorics on words}, volume 105.
\newblock Cambridge University Press, 2005.

\bibitem{DBLP:journals/algorithms/MitsuyaNIBT21}
S.~Mitsuya, Y.~Nakashima, S.~Inenaga, H.~Bannai, and M.~Takeda.
\newblock Compressed communication complexity of {Hamming} distance.
\newblock {\em Algorithms}, 14(4):116, 2021.

\bibitem{DBLP:journals/tit/NavarroOP21}
G.~Navarro, C.~Ochoa, and N.~Prezza.
\newblock On the approximation ratio of ordered parsings.
\newblock {\em {IEEE} Trans. Inf. Theory}, 67(2):1008--1026, 2021.

\bibitem{DBLP:journals/dam/NishimotoIIBT20}
T.~Nishimoto, T.~I, S.~Inenaga, H.~Bannai, and M.~Takeda.
\newblock Dynamic index and {LZ} factorization in compressed space.
\newblock {\em Discret. Appl. Math.}, 274:116--129, 2020.

\bibitem{LZRR}
T.~Nishimoto and Y.~Tabei.
\newblock {LZRR:} {LZ77} parsing with right reference.
\newblock In A.~Bilgin, M.~W. Marcellin, J.~Serra{-}Sagrist{\`{a}}, and J.~A.
  Storer, editors, {\em Data Compression Conference, {DCC} 2019, Snowbird, UT,
  USA, March 26-29, 2019}, pages 211--220. {IEEE}, 2019.

\bibitem{DBLP:journals/tcs/Rytter03}
W.~Rytter.
\newblock Application of {Lempel-Ziv} factorization to the approximation of
  grammar-based compression.
\newblock {\em Theor. Comput. Sci.}, 302(1-3):211--222, 2003.

\bibitem{DBLP:journals/jacm/StorerS82}
J.~A. Storer and T.~G. Szymanski.
\newblock Data compression via textual substitution.
\newblock {\em J. {ACM}}, 29(4):928--951, 1982.

\bibitem{DBLP:conf/cpm/UrabeNIBT19}
Y.~Urabe, Y.~Nakashima, S.~Inenaga, H.~Bannai, and M.~Takeda.
\newblock On the size of overlapping {Lempel}-{Ziv} and {Lyndon}
  factorizations.
\newblock In N.~Pisanti and S.~P. Pissis, editors, {\em 30th Annual Symposium
  on Combinatorial Pattern Matching, {CPM} 2019, June 18-20, 2019, Pisa,
  Italy}, volume 128 of {\em LIPIcs}, pages 29:1--29:11. Schloss Dagstuhl -
  Leibniz-Zentrum f{\"{u}}r Informatik, 2019.

\bibitem{LZ77}
J.~Ziv and A.~Lempel.
\newblock A universal algorithm for sequential data compression.
\newblock {\em {IEEE} Trans. Inf. Theory}, 23(3):337--343, 1977.

\bibitem{LZ78}
J.~Ziv and A.~Lempel.
\newblock Compression of individual sequences via variable-rate coding.
\newblock {\em {IEEE} Trans. Inf. Theory}, 24(5):530--536, 1978.

\end{thebibliography}
\end{document}